\renewcommand*\@fnsymbol[1]{\the#1}
\setlist[itemize]{noitemsep}
\setlist[enumerate]{noitemsep}
\titleformat{\subsection}{\bfseries\large}{\thesubsection}{.5em}{}
\titleformat{\subsubsection}{\normalfont\normalsize\itshape}{\thesubsubsection}{.5em}{}
\newtheoremstyle{THM}{2ex}{1ex}{\itshape}{}{\bfseries}{}{.5em}{\thmname{#1}\thmnumber{ #2}\thmnote{ (#3)}.}
\newtheoremstyle{DEF}{2ex}{1ex}{}{}{\bfseries}{}{.5em}{\thmname{#1}\thmnumber{ #2}\thmnote{ (#3)}.}
\newtheoremstyle{REM}{2ex}{1ex}{}{}{\itshape}{}{.5em}{\thmname{#1}\thmnumber{ #2}\thmnote{ (#3)}.}
\declaretheorem[name=Definition,sharenumber = theorem, style = DEF]{definition}
\declaretheorem[name=Proposition,sharenumber = theorem, style = THM]{proposition}
\declaretheorem[name=Lemma,sharenumber = theorem, style = THM]{lemma}
\declaretheorem[name=Remark,sharenumber = theorem, style = REM]{remark}
\numberwithin{equation}{section}
\numberwithin{figure}{section}
\newcommand{\RR}{\mathbb{R}}
\newcommand{\NN}{\mathbb{N}}
\newcommand{\PP}{\mathbb{P}}
\newcommand{\EE}{\mathbb{E}}
\newcommand{\VV}{\mathbb{V}}
\newcommand{\QQ}{\mathbb{Q}}
\newcommand{\bS}{\mathbb{S}}
\newcommand{\cA}{\mathcal{A}}
\newcommand{\cF}{\mathcal{F}}
\newcommand{\cM}{\mathcal{M}}
\newcommand{\cN}{\mathcal{N}}
\newcommand{\dd}{\mathrm{d}}
\newcommand{\Rint}{R^{\mathrm{int}}_{\bm{S}}}
\newcommand{\Rmon}{R_{\bm{S}}}
\newcommand{\RintX}{R^{\mathrm{int}}_{\bm{S}}(\bm{X})}
\newcommand{\RmonX}{R_{\bm{S}}(\bm{X}_T)}
\newcommand{\RintXhat}{\hat{R}^{\mathrm{int}}_{\bm{S}}(\bm{X})}
\newcommand{\Xlambda}{\bm{X}^{\bm{\lambda},\bm{S}}_T}
\newcommand{\Xlambdan}{\bm{X}^{\bm{\lambda}_n,\bm{S}}_T}
\newcommand{\XT}{\bm{X}_T}
\newcommand{\ST}{\bm{S}_T}
\newcommand{\xo}{\bm{x}_0}
\newcommand{\so}{\bm{s}_0}
\DeclareMathOperator*{\argmin}{arg\,min}
\DeclareMathOperator*{\VaR}{\mathrm{VaR}}
\DeclareMathOperator*{\ES}{\mathrm{ES}}
\DeclareMathOperator*{\conv}{\mathrm{conv}}
\def\be{\begin{equation} \label}
\def\ee{\end{equation}}
\newcommand{\htimes}{\,\scaleobj{0.8}{\odot}\,}
\newcommand{\hdiv}{\,\scaleobj{0.8}{\oslash}\,}
\definecolor{myGreen}{RGB}{39,120,7} 
\definecolor{myBlue}{RGB}{31,88,169}
\definecolor{myGrey}{RGB}{232,237,246} 
\newcommand*\samethanks[1][\value{footnote}]{\footnotemark[#1]}
\title{Set-valued intrinsic measures of systemic risk}
\author{Jana Hlavinov{\'a}\thanks{Vienna University of Economics and Business, Institute for Statistics and Mathematics, Welthandelsplatz 1, 1020 Vienna, Austria, \href{mailto:jana.hlavinova@wu.ac.at}{jana.hlavinova@wu.ac.at} and \href{mailto:birgit.rudloff@wu.ac.at}{birgit.rudloff@wu.ac.at}} \hspace{1cm} Birgit Rudloff\samethanks[1] \hspace{1cm} Alexander Smirnow\thanks{University of Zurich, Department of Banking and Finance, Plattenstrasse 14, 8032 Zurich, Switzerland, \href{mailto:alexander.smirnow@bf.uzh.ch}{alexander.smirnow@bf.uzh.ch}}}
\begin{document}
\maketitle

\begin{abstract}
\noindent In recent years, it has become apparent that an isolated microprudential approach to capital
adequacy requirements of individual institutions is insufficient. 
It can increase the homogeneity of the financial system and ultimately the cost to society. 
For this reason, the focus of the financial and mathematical literature has shifted towards the macroprudential regulation of the financial network as a whole. 
In particular, systemic risk measures have been discussed as a risk measurement and mitigation tool. 
In this spirit, we adopt a general approach of multivariate, set-valued risk measures and combine it with the notion of intrinsic risk measures. 
In order to define the risk of a financial position, intrinsic risk measures utilise only internal capital, which is received when part of the currently held assets are sold, instead of relying on external capital.
We translate this methodology into the systemic framework and show that systemic intrinsic risk measures have desirable properties such as the set-valued equivalents of monotonicity and quasi-convexity. 
Furthermore, for convex acceptance sets we derive a dual representation of the systemic intrinsic risk measure.
We apply our methodology to a modified Eisenberg-Noe network of banks and discuss the appeal of this approach from a regulatory perspective, as it does not elevate the financial system with external capital.
We show evidence that this approach allows to mitigate systemic risk by moving the network towards more stable assets.

\bigskip	
	
\noindent \textbf{Keywords:} systemic risk, set-valued risk measure, intrinsic risk measure, convex duality	
\end{abstract}

\newpage

\section{Introduction}
A key task of risk management is the quantification and assessment of the riskiness of a certain financial position, a portfolio, or even a financial system.
\citet{bib:ADHE} were the first to axiomatically define risk measures with a specific management action in mind, namely, raising external capital and holding it in a reference asset.
We call these measures monetary risk measures, a term coined by \citet{bib:FSconvex}, and interpret their value as the minimal monetary amount to be added either directly or through a reference asset to the existing financial position to make it acceptable.
\citet{bib:WPC1, bib:WPC2} formalised this approach for general eligible assets, including the case of defaultable investment vehicles.
Monetary risk measures have a straightforward operational interpretation and, seemingly, they can be directly applied as a risk mitigation tool in the real world.
However, the question of how and at which cost it is possible to raise the necessary capital remains unanswered.
External capital is often not readily available and therefore \citet{bib:FSint} explore the methodology of using internal resources only. 
They suggest a different management action to alter the existing position, namely, selling a fraction of the current unacceptable position and investing the acquired funds into a general eligible asset as defined in \citep{bib:WPC1, bib:WPC2}.
This way, in addition to gaining the benefit of holding capital in a safe eligible asset, the existing unacceptable position is reduced.
Choosing a suitable eligible asset, for example one with negative correlation to the existing position, can further reduce the risk of the new altered position.
In \citep{LAUDAGE2022254}, the concept of intrinsic risk measures of \citet{bib:FSint} is combined with scalar multi-asset risk measures of \citet{bib:WPC3} to consider the multi-asset intrinsic risk of a random variable.

\medskip

The question of capital adequacy is of even higher importance in the setting of financial systems. 
During the last two decades and especially considering the events during the Great Recession from 2007 to 2009, it has become apparent that applying a scalar risk measure to the financial positions of each participant in a system individually, and thereby ignoring dependencies within the system, is not an appropriate approach to measure systemic risk.
Since then, many contributions have improved the understanding of systemic risk and discussed necessary countermeasures.
A discussion of the events during the recession and the economic mechanisms behind them is given in \citep{bib:brunnermeier09}.
To better quantify systemic risk, \citet{bib:adrian16} propose a conditional version of Value-at-Risk (CoVaR) and \citet{Acharya2016} introduce the Systemic Expected Shortfall (SES), both to measure the contribution of each financial entity to the overall risk in the system.
An extensive survey of systemic risk measures including publications before 2016 is provided in \citep{bib:systemic_risk_survey}.
\citet{ChenIyengarMoallemi2013} go on to employ an axiomatic approach to define measures of systemic risk. 
Their approach results in risk measures of the form $\rho(\Lambda(\bm{Y}))$, where $\rho$ is a scalar risk measure, $\Lambda\colon\RR^d\to\RR$ is a non-decreasing aggregation function and $\bm{Y}$ is a $d$-dimensional random vector representing wealths or net worths of each player in a financial system.
\citet{bib:FRW} then argue that applying a scalar risk measure to the aggregated outcome of the system leads to identifying the bailout costs.
These, however, are the costs of saving the system after it has been disrupted, rather than capital requirements that would prevent the system from experiencing distress.
Moreover, using just a single number to quantify the systemic risk of a system with $d\geq2$ participants, important information can get lost, for instance the way in which different participants contribute to the overall risk.

Motivated by this realisation, \citet{bib:FRW} introduce set-valued measures of systemic risk. 
In their framework, the risk measure of a financial system is a collection of all vectors of capital allocations that, added to the individual participants' positions, yield an acceptable system. 
The system is deemed acceptable if the random variable describing the aggregated outcome of the system is an element of the acceptance set.
Choosing an appropriate vector from the set-valued risk measure, capital requirements can be posed by the regulator.
The approach of \citet{bib:FRW} is general in the sense that many risk measures such as the ones in \citep{ChenIyengarMoallemi2013} and \citep{bib:adrian16} can be embedded into their framework.

\medskip

In this article, we combine the management action of intrinsic risk measures introduced in \citep{bib:FSint} and the set-valued approach to measure systemic risk introduced in \citep{bib:FRW}.
For a financial system with $d\geq2$ participants we define the set-valued intrinsic measure of systemic risk as the collection of all vectors of fractions $\bm{\lambda} \in [0,1]^d$ such that if the participants sell the respective fraction of their assets and invest this raised capital in an eligible asset, the aggregated system will be acceptable.
In the framework of a simulated network, we show evidence that with appropriate aggregation functions these acceptable aggregated systems are less volatile, have milder worst case outcomes, and are more likely to repay more of their liabilities to society, compared to the monetary approach.
Intrinsic systemic risk measures therefore provide not only an alternative to measuring risk as capital injections, but also provide alternative risk-reducing management actions that are practical in cases where external monetary injections are unfavourable. 

\medskip

The rest of the paper is structured as follows. 
In \cref{sec:prelim}, we introduce the terminology and lay down the mathematical framework.
We briefly discuss the underlying notion of acceptability and recapitulate scalar risk measures and set-valued measures of systemic risk. 
In \cref{sec:intrinsic_systemic_risk}, we introduce our novel set-valued intrinsic risk measures.
We derive their properties and juxtapose them to scalar intrinsic risk measures and set-valued measures of systemic risk.
Furthermore, we present two algorithms to approximate intrinsic systemic risk measurements numerically. 
We derive a dual representation of intrinsic systemic risk measures in \cref{sec:dual_representation}.
Finally, we apply set-valued intrinsic risk measures to an Eisenberg-Noe network including a sink node and highlight how they can provide asset allocations which improve an unacceptable financial system in \cref{sec:network}.
In \cref{sec:conclusion}, we conclude our findings and discuss possible extensions and further research avenues.

\section{Monetary, intrinsic, and set-valued risk measures} \label{sec:prelim}

In this section, we review important terminology from risk measure theory.
We define acceptance sets axiomatically and use them to define monetary and intrinsic risk measures.
Then we proceed to introduce set-valued measures of systemic risk with general eligible assets.

\medskip

We briefly introduce our notation.
Throughout this paper we work on a probability space $(\Omega, \cF, \PP)$.
We employ a one-period model from time $t=0$ to $t=T$.
Financial positions or assets have a known initial value at time $t=0$ and a random value at a fixed point in time $t = T$. 
These scalar future outcomes are represented by random variables in $L^p = L^p(\Omega,\cF,\PP;\RR)$, $p\in[1,\infty]$, the space of equivalence classes of $p$-integrable random variables endowed with the $L^p$-norm or, in the case $p=\infty$, of essentially bounded random variables endowed with the 
weak$^*$ topology $\sigma(L^\infty,L^1)$.
Future outcomes of $d\in\mathbb{N}$, $d\geq2$ parties in a financial network are represented by multivariate random variables in $L^p_d = L^p(\Omega, \cF, \PP; \RR^d)$ endowed with the canonical norm induced by the $L^p$-norm (respectively the weak$^*$ topology if $p=\infty$) and the $p$-norm on $\RR^d$.

We indicate scalar future outcomes by capital letters with the subscript $T$, for example $X_T$, and their known initial prices by lower-case letters with the subscript $0$, for example $x_0$.
In the multivariate case, we use a bold font to simplify the differentiation, for example $\bm{X}_T$ and $\bm{x}_0$.

We use the componentwise ordering $\leq$ on $\RR^d$, that means for $\bm{x}, \bm{y} \in \RR^d$ we write $\bm{x} \leq \bm{y}$ if and only if $x_k \leq y_k$ for all $k \in \{1,\ldots,d\}$, and we write $\bm{x} < \bm{y}$ if all inequalities are strict.
Furthermore, we use the notation $\RR_+^d=\{\bm{x}\in\RR^d \mid \bm{x}\geq0\}$ and $\RR_{++}^d=\{\bm{x}\in\RR^d \mid \bm{x} > 0\}$ to denote the non-negative and positive orthant of $\RR^d$, respectively.
For $d=1$ we suppress the superscript.
For any set $A \subseteq \RR^d$ we denote its power set by $\mathcal{P}(A)$.

We define the cones $(L^p_d)_+ = \{ \bm{X}_T \in L^p_d \mid  \bm{X}_T \geq 0 \; \PP\text{-a.s.}\}$ and $(L^p_d)_{++} = \{ \bm{X}_T \in L^p_d \mid  \bm{X}_T > 0 \; \PP\text{-a.s.} \}$.

Finally, we denote the Hadamard product and the Hadamard division of $\bm{x},\bm{y}\in\RR^d$ by  
\begin{align*}
\bm{x} \htimes \bm{y} = (x_1 y_1, \ldots, x_d y_d)^\intercal \in \RR^d 
\quad  \text{and} \quad 
\bm{x} \hdiv \bm{y} = \left(\frac{x_1}{y_1}, \ldots, \frac{x_d}{y_d}\right)^\intercal \in \RR^d \,,
\end{align*}
respectively.

\medskip

\subsection{Monetary and intrinsic risk measures}

In risk measure theory, we can differentiate between acceptable and non-acceptable outcomes with the help of acceptance sets.

\begin{definition}\label{defn:acceptance}
A set $\cA\subset L^p$ is called an \emph{acceptance set} if it satisfies
\begin{enumerate}[label = (A\arabic*)]
\item \emph{Non-triviality}: $\cA$ is neither empty nor the whole $L^p$ space, and \label{propA:nontrivial}
\item \emph{Monotonicity}: if $X_T \in \cA$ and $Y_T \in  L^p$ with $X_T \leq Y_T$ $\PP$-a.s., then $Y_T \in \cA$. \label{propA:monotone}
\end{enumerate}
We call the future outcome $X_T$ of a financial position \emph{acceptable} if and only if $X_T\in \cA$.
We will often assume that $\cA$ is a \emph{closed} set, that is $\cA$ is equal to its closure, $\cA = \bar{\cA}$. 
Furthermore, we say an acceptance set $\cA$ is a \emph{cone} if it satisfies
\begin{enumerate}[label = (A\arabic*)]
\setcounter{enumi}{2}
	\item \emph{Conicity}: for all $c > 0$ and $X_T \in \cA$ we have $c X_T \in \cA$,\label{propA:cone}
\end{enumerate} 
and we call $\cA$ \emph{convex} if it satisfies 
\begin{enumerate}[label = (A\arabic*)]
\setcounter{enumi}{3}
	\item \emph{Convexity}: for all $\alpha \in[0,1]$ and $X_T, Y_T \in \cA$ we have $\alpha X_T + (1-\alpha) Y_T \in \cA$.
\label{propA:convex}
\end{enumerate} 
\end{definition}

Properties \ref{propA:nontrivial} and \ref{propA:monotone} are intuitively desirable. 
The acceptance set should be non-trivial such that there are both acceptable and non-acceptable outcomes. 
Monotonicity implies that any position with an outcome that is $\PP$-a.s.~greater than or equal to the outcome of an acceptable position is also acceptable.
Conicity implies that any acceptable position can be scaled by a positive factor and still be acceptable. 
Convexity is of importance when discussing diversification, since it implies that any convex combination of any two acceptable positions is also acceptable.

\bigskip

To quantify the risk of a financial position, we can use this binary structure of acceptability and non-acceptability imposed on the underlying space to construct risk measures.

To this end, we first introduce an \emph{eligible asset} as a tuple $S=(s_0,S_T)\in\RR_{++}\times (L^p)_+$.
Such an eligible asset is a traded asset with initial unitary price $s_0$ and random payoff $S_T$ at time $T$ and serves as an investment vehicle.
For more information on this form of eligible assets see \citep{bib:WPC1,bib:WPC2}.
 
Monetary risk measures quantify the risk of a random variable by its distance to the boundary of the acceptance set.
The distance is measured by the additional monetary amount that needs to be added through the eligible asset to the current financial position to make it acceptable. 
\begin{definition} \label{def:monetary_rm}
    Let $S \in \RR_{++}\times (L^p)_+$ be an eligible asset and let $\cA \subset L^p$ be an acceptance set.
    A \emph{monetary risk measure} ${\rho_{\cA,S} \colon  L^p \to \RR \cup \{+\infty , -\infty\}}$ is defined by
    \begin{align*}
\rho_{\cA,S}(X_T)=\inf\Big\{ m\in\RR \mid X_T+\frac{m}{s_0}S_T\in\cA \Big\}.
\end{align*}
\end{definition}

By definition and by the structure of the acceptance set, monetary risk measures directly satisfy the following properties:

\begin{enumerate}[label=(R\arabic*)]
\item \label{propR:non_constant} \emph{Nonconstant}: $\rho_{\cA,S}$ has at least two distinct values, 
\item \label{propR:cashadditivity}\emph{$S$-additivity}: for all $X_T\in L^p$, $k\in\RR$ we have $\rho_{\cA,S}\big(X_T+k\frac{S_T}{s_0}\big)=\rho_{\cA,S}(X_T)-k$,
\item \label{propR:monotonicity} \emph{Monotonicity}: for all $X_T , Y_T \in  L^p$ if $X_T \leq Y_T$ $\PP$-a.s., then $\rho_{\cA,S}(X_T) \geq \rho_{\cA,S}(Y_T)$.
\end{enumerate}
Properties \ref{propR:non_constant}-\ref{propR:monotonicity} constitute the basic structure of monetary risk measures. 
They are direct consequences of the definitions of $\cA$ and $\rho_{\cA,S}$ and are shown in the proof of Proposition 3.2.3 in \citep{bib:Munari}.

Furthermore, we call a monetary risk measure \emph{positively homogeneous}, if it satisfies 
\begin{enumerate}[label=(R\arabic*)]
\setcounter{enumi}{3}
\item \label{propR:posHomog} \emph{Positive homogeneity}: for all $c>0$, $X_T\in L^p$ we have $\rho_{\cA,S}(cX_T) = c\rho_{\cA,S}(X_T)$.
\end{enumerate}

Finally, we call a monetary risk measure \emph{coherent}, as defined by \citet{bib:ADHE}, if in addition to properties \ref{propR:non_constant} through \ref{propR:posHomog} it satisfies
\begin{enumerate}[label=(R\arabic*)]
\setcounter{enumi}{4}
\item \label{propR:subadditivity} \emph{Subadditivity}: for all $X_T,Y_T\in L^p$ we have $\rho_{\cA,S}(X_T+Y_T)\leq\rho_{\cA,S}(X_T)+\rho_{\cA,S}(Y_T)$.
\end{enumerate}

It is noteworthy that under positive homogeneity, subadditivity is equivalent to
\begin{enumerate}[label=(R\arabic*)]
\setcounter{enumi}{5}
\item \label{propR:convexity} \emph{Convexity}: for all $\alpha \in [0 , 1]$, and all $X_T,Y_T \in  L^p$ we have
\begin{align*}
\rho_{\cA,S}(\alpha X_T+(1-\alpha)Y_T)\leq\alpha\rho_{\cA,S}(X_T)+(1-\alpha)\rho_{\cA,S}(Y_T) \,.
\end{align*}	
\end{enumerate}

Therefore, convex risk measures as in \citep[Definition 4.4]{bib:FSconvex} naturally arise from coherent risk measures by dropping property \ref{propR:posHomog} and substituting \ref{propR:subadditivity} with \ref{propR:convexity}.

\medskip

Acceptance sets and monetary risk measures exhibit a canonical correspondence which allows us to define one from the other.
In \cref{def:monetary_rm}, we used an acceptance set to define a monetary risk measure.
\cref{prop:accset_rm_correspondence} describes how a functional satisfying properties \ref{propR:non_constant}-\ref{propR:monotonicity} defines an acceptance set.
Furthermore, it shows how properties \ref{propR:posHomog} and \ref{propR:convexity} can directly be inferred from \ref{propA:cone} and \ref{propA:convex} and vice versa.

\begin{proposition} \label{prop:accset_rm_correspondence}
Let $\cA$ be an acceptance set and let $\rho_{\cA,S}$ be the corresponding monetary risk measure.
If $\cA$ is a cone, then $\rho_{\cA,S}$ is positively homogeneous, and if $\cA$ is convex, then $\rho_{\cA,S}$ is convex.\\

On the other hand, any functional $\rho_S \colon  L^p \to \RR$ satisfying properties \ref{propR:non_constant}-\ref{propR:monotonicity} defines an acceptance set $\cA_\rho$ via
\begin{align*}
\cA_\rho=\left\{X_T\in L^p \mid \rho_S(X_T)\leq0\right\} .
\end{align*}
If $\rho_S$ is positively homogeneous, then $\cA_\rho$ is a cone, and if $\rho_S$ is convex, then $\cA_\rho$ is convex.

In particular, $\rho_{\cA_\rho,S} = \rho_S$ and $\cA_{\rho_{\cA,S}}\subseteq\cA$ with equality if $\cA$ is closed.
\end{proposition}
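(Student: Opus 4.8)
The plan is to view the statement as three linked claims and reduce all of them to one structural observation about the feasible set of shifts. For fixed $X_T$, write $M(X_T)=\{m\in\RR \mid X_T+\tfrac{m}{s_0}S_T\in\cA\}$, so that $\rho_{\cA,S}(X_T)=\inf M(X_T)$. Because $S_T\ge 0$ and $s_0>0$, whenever $m\in M(X_T)$ and $m'\ge m$ we have $X_T+\tfrac{m'}{s_0}S_T\ge X_T+\tfrac{m}{s_0}S_T$ $\PP$-a.s., so monotonicity~\ref{propA:monotone} gives $m'\in M(X_T)$; thus $M(X_T)$ is a half-line unbounded above whose left endpoint is $\rho_{\cA,S}(X_T)$. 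This half-line picture drives every infimum manipulation below.

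First I would prove the inheritance from $\cA$ to $\rho_{\cA,S}$. For positive homogeneity~\ref{propR:posHomog}, fix $c>0$ and substitute $m=cn$ in the defining infimum; conicity~\ref{propA:cone} gives $c\big(X_T+\tfrac{n}{s_0}S_T\big)\in\cA$ iff $X_T+\tfrac{n}{s_0}S_T\in\cA$, so $M(cX_T)=c\,M(X_T)$ and the infimum scales by $c$. For convexity~\ref{propR:convexity}, take any $m_X\in M(X_T)$ and $m_Y\in M(Y_T)$; applying convexity~\ref{propA:convex} of $\cA$ to the $\alpha$-combination of $X_T+\tfrac{m_X}{s_0}S_T$ and $Y_T+\tfrac{m_Y}{s_0}S_T$ shows $\alpha m_X+(1-\alpha)m_Y\in M(\alpha X_T+(1-\alpha)Y_T)$, and letting $m_X\downarrow\rho_{\cA,S}(X_T)$, $m_Y\downarrow\rho_{\cA,S}(Y_T)$ yields the convexity inequality.

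Next I would treat the reverse passage through $\cA_\rho=\{\rho_S\le 0\}$. That $\cA_\rho$ is an acceptance set is immediate: monotonicity~\ref{propR:monotonicity} of $\rho_S$ gives~\ref{propA:monotone}, while non-triviality~\ref{propA:nontrivial} follows from $S$-additivity~\ref{propR:cashadditivity}, since $\rho_S(X_T+k\tfrac{S_T}{s_0})=\rho_S(X_T)-k$ forces the real-valued $\rho_S$ to attain every real value, so $\{\rho_S\le0\}$ is neither empty nor all of $L^p$. The cone and convexity claims for $\cA_\rho$ then read off directly from the defining inequality $\rho_S\le0$ together with~\ref{propR:posHomog} and~\ref{propR:convexity}, respectively.

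Finally, for the consistency statements I would compute $\rho_{\cA_\rho,S}$ by inserting the defining inequality of $\cA_\rho$ and using $S$-additivity~\ref{propR:cashadditivity}:
\begin{align*}
\rho_{\cA_\rho,S}(X_T)
=\inf\Big\{m \mid \rho_S\big(X_T+\tfrac{m}{s_0}S_T\big)\le0\Big\}
=\inf\{m \mid \rho_S(X_T)-m\le0\}
=\rho_S(X_T),
\end{align*}
giving $\rho_{\cA_\rho,S}=\rho_S$. For the set identity, the inclusion $\cA\subseteq\cA_{\rho_{\cA,S}}$ is immediate, since $m=0\in M(X_T)$ for every $X_T\in\cA$ forces $\rho_{\cA,S}(X_T)\le0$. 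The reverse inclusion is the delicate point and is exactly where closedness enters: if $\rho_{\cA,S}(X_T)<0$ then some $m<0$ lies in $M(X_T)$, and upward closedness places $0\in M(X_T)$, whence $X_T\in\cA$; but if $\rho_{\cA,S}(X_T)=0$ one obtains only $X_T+\tfrac{m}{s_0}S_T\in\cA$ for all $m>0$, and concluding $X_T\in\cA$ by letting $m\downarrow0$ requires $\cA=\bar{\cA}$. This boundary case at $\rho_{\cA,S}(X_T)=0$ is the main obstacle, and it genuinely fails without closedness — for the set of $\PP$-a.s.\ strictly positive outcomes with $S_T\equiv1$, the induced $\cA_{\rho_{\cA,S}}$ is the strictly larger set of $\PP$-a.s.\ nonnegative outcomes — which is precisely why equality is asserted only for closed $\cA$.
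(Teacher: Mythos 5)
Your proof is correct, and it is worth noting that the paper never actually proves this proposition itself: its ``proof'' is a pointer to Propositions 4.6--4.7 in \citep{bib:FS} and Propositions 3.2.3--3.2.8 in \citep{bib:Munari}. Your argument is a self-contained direct verification, and it is essentially the standard one from those references, correctly adapted to general eligible assets: the upward-closedness of the feasible set $M(X_T)$ (using $S_T\geq 0$, $s_0>0$ and \ref{propA:monotone}) drives the scaling identity $M(cX_T)=c\,M(X_T)$ for positive homogeneity, the convex-combination-of-feasible-shifts argument for convexity, $S$-additivity \ref{propR:cashadditivity} collapses $\rho_{\cA_\rho,S}$ to $\rho_S$, and the norm-approximation $X_T+\tfrac{m}{s_0}S_T\to X_T$ as $m\downarrow 0$ handles the boundary case under closedness. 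Your observation that non-triviality of $\cA_\rho$ follows from \ref{propR:cashadditivity} forcing a real-valued $\rho_S$ to attain every real value is also the right mechanism.

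One substantive point deserves emphasis: the unconditional inclusion you prove, $\cA\subseteq\cA_{\rho_{\cA,S}}$, is the \emph{reverse} of the one printed in the proposition, which asserts $\cA_{\rho_{\cA,S}}\subseteq\cA$ without closedness. Your version is the correct one, and your own counterexample settles the matter: for $\cA=\{X_T \mid X_T>0\ \PP\text{-a.s.}\}$ with $S_T\equiv 1$, $s_0=1$, one gets $\cA_{\rho_{\cA,S}}=\{X_T \mid X_T\geq 0\ \PP\text{-a.s.}\}\supsetneq\cA$, so the printed direction fails for non-closed $\cA$, while $\cA\subseteq\cA_{\rho_{\cA,S}}$ always holds because $0\in M(X_T)$ for every $X_T\in\cA$. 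This agrees with the cited references; the inclusion sign in the statement is a typo, and your proof silently corrects it. Only minor quibbles remain: your description of $M(X_T)$ as a half-line unbounded above presumes $M(X_T)\neq\emptyset$ (the risk measure may take the value $+\infty$), and the convexity argument should, strictly speaking, dispose of the extended-real edge cases separately; neither affects the substance.
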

\begin{proof}
See the proofs of propositions 4.6 and 4.7 in \citep{bib:FS} for the case of bounded random variables, or the proofs of propositions 3.2.3, 3.2.4, 3.2.5, and 3.2.8 in \citep{bib:Munari} for any real ordered topological vector space.
\end{proof}

The approach with general eligible assets is versatile and, in particular, allows the use of defaultable investment vehicles. 
Cash-additive risk measures as in \citep{bib:ADHE} are a special case with $\frac{S_T}{s_0} \equiv r \in \RR_{++}$.
Two monetary risk measures we will use to construct acceptance sets via \cref{prop:accset_rm_correspondence} in \cref{sec:network} are Value-at-Risk and Expected Shortfall.
\begin{definition}
Let $X \in L^p$ and $\alpha \in (0,1)$. 
The Value-at-Risk at level $\alpha$ is defined as
\begin{align*}
\VaR\nolimits_\alpha (X) = \inf \{ m \in \RR \mid \PP[X + m < 0] \leq \alpha\} \,,
\end{align*}
and the Expected Shortfall at level $\alpha$ is defined as
\begin{align*}
\ES\nolimits_\alpha (X) = \frac{1}{\alpha} \int_0^\alpha \VaR\nolimits_\beta(X) \, d\beta \,.
\end{align*}
\end{definition}

Intrinsic risk measures are also defined via eligible assets and acceptance sets, but their underlying management operation is different.
The methodology of \citet{bib:FSint} relies on restructuring the unacceptable position instead of adding external capital.
\begin{definition}
Let $\cA$ be an acceptance set and let $S=(s_0,S_T)\in\RR_{++}\times \cA$ be an eligible asset. 
The \emph{intrinsic risk measure} is a functional $\rho^\mathrm{int}_{\cA, S} \colon \RR_{++}\times L^p \to [0,1]$ defined as 
\begin{equation} \label{intRS}
\rho^\mathrm{int}_{\cA, S}(X) = \inf\Big\{\lambda\in\left[0,1\right] \mid (1-\lambda)X_T+\lambda\frac{x_0}{s_0}S_T\in\cA\Big\}.
\end{equation}
\end{definition}

In this approach, a percentage $\lambda \in [0,1]$ of the initial position $X_T$ is sold at the initial price $x_0$.
The received monetary value $\lambda x_0$ is then reinvested in the eligible asset with return $\frac{S_T}{s_0}$ resulting in the convex combination
\begin{align} \label{eq:intrinsic_position}
    X_T^{\lambda,S} \coloneqq (1-\lambda)X_T + \lambda \frac{x_0}{s_0} S_T \,.
\end{align}
The intrinsic risk measure returns the smallest $\lambda$ such that $X_T^{\lambda,S}$ is acceptable, at least in the extended sense that $X_T^{\lambda,S} \in \bar{\cA}$, for suitable choices of $S_T$ and $\cA$.
For more details see the discussion below Definition 3.1 in \citep{bib:FSint}.

\medskip

Similarly to monetary risk measures, intrinsic risk measures are decreasing when we choose a suitable order on $\RR_{++} \times  L^p$.
However, in contrast to the convex correspondence between an acceptance set and its monetary risk measure, convexity of the acceptance set corresponds to quasi-convexity of the intrinsic risk measure.
The proof of the following assertions can be found in \citep[Proposition 3.1]{bib:FSint}.

Let $\cA$ be an acceptance set including $0$ and let $S \in \RR_{++} \times \cA$ be an eligible asset. 
Then the intrinsic risk measure satisfies
\begin{enumerate}
    \item \emph{Element-wise monotonicity}: if $x_0 \geq y_0$ and $X_T \geq Y_T$, then $\rho^\mathrm{int}_{\cA, S}(X) \leq \rho^\mathrm{int}_{\cA, S}(Y)$.
    \item \emph{Return-wise monotonicity}: if $\cA$ is a cone and $\frac{X_T}{x_0} \geq \frac{Y_T}{y_0}$, then $\rho^\mathrm{int}_{\cA, S}(X) \leq \rho^\mathrm{int}_{\cA, S}(Y)$.
    \item \emph{Quasi-convexity}: if $\cA$ is convex, then for all $\alpha \in [0,1]$ and all ${X,Y \in \RR_{++} \times  L^p}$ we have
    \begin{align*}
        \rho^\mathrm{int}_{\cA, S}(\alpha X + (1-\alpha)Y) \leq \max\{ \rho^\mathrm{int}_{\cA, S}(X) , \rho^\mathrm{int}_{\cA, S}(Y) \} \,.
    \end{align*}
\end{enumerate}

Quasi-convexity precisely represents the diversification principle, which states that diversification should not increase risk.
Intrinsic risk measures with respect to convex acceptance sets satisfy this principle and do not penalise diversification.
On the other hand, monetary risk measures use convexity to represent the diversification principle, which is possible, since convexity and quasi-convexity are equivalent under $S$-additivity,
see also \citep{bib:CMMM} and the references therein for the case with $s_0=1$ and $S_T=1$ $\PP$-a.s.~ (i.e.\ cash-additive).

We show in \cref{claim:systemic_intrinsic_el_monotonicity} and \cref{claim:systemic_intrinsic_quasi-convex} that these properties translate to the intrinsic measure of systemic risk.

\subsection{Measures of systemic risk}
\label{sec:measures_systemic_risk}
While monetary and intrinsic risk measures can quantify the risk of a single isolated financial institution or position, they are not suitable to measure the systemic risk of a financial system. 
In this section, we turn to financial systems with $d \geq 2$ participants.
As mentioned in the introduction, risk measures of the form $\rho \circ \Lambda: L^p_d \to \RR$ might discard crucial information, as capital is added after aggregation to the whole system so that identifying and understanding the source of the risk becomes difficult.
In order to be most useful for regulators to recognise and mitigate systemic risk and prevent cascades of risk, one can add risk capital to each institution separately before aggregation.
This way, the risk capital of single participants or groups of participants can be adjusted while observing the effects on the whole system.
For this reason we adopt the set-valued approach of \citet{bib:FRW}, where the authors search for all capital allocations $\bm{k} \in \RR^d$ such that $\Lambda(\bm{X}_T + \bm{k})$ belongs to the acceptance set $\cA$.
For the sake of consistency, we generalise Definition 2.2 in \citep{bib:FRW} to $\bm{S}$-additive systemic risk measures. 

\medskip

Consider an interconnected network of $d \geq 2$ financial institutions enumerated by $\{1, \ldots, d \}$.
Let the random vector $\bm{X}_T = (X_T^1 , \ldots , X_T^d)^\intercal \in L^p_d$ denote their future wealths.
In order to use univariate acceptance sets, we need the concept of an aggregation function as a mechanism to map random vectors to univariate random variables. 

\begin{definition}
An \emph{aggregation function} is a non-constant, non-decreasing function $\Lambda\colon\RR^d\to\RR$.
This means that $\Lambda$ has at least two distinct values and that if $\bm{x}\leq \bm{y}$, then $\Lambda(\bm{x})\leq\Lambda(\bm{y})$.
\end{definition}

Furthermore, an aggregation function can be concave or positively homogeneous. 
If these properties are required, we list them explicitly.
For an overview and a discussion of specific examples of aggregation functions see \cite[Example 2.1]{bib:FRW} and the references given therein.

\begin{definition} \label{def:setval_systemic_risk_measure}
Let $\cA$ be an acceptance set, let $\Lambda$ be an aggregation function, and let $\bm{S}=(\bm{s}_0,\bm{S}_T)\in\RR^d_{++}\times (L^p_d)_+$ be a vector of eligible assets. 
A \emph{set-valued measure of systemic risk} is a functional $R_{\bm{S}}\colon L^p_d \to \mathcal{P}(\RR^d)$ defined by 
\begin{align} \label{eq:systRS}
R_{\bm{S}}(\bm{X}_T) &=\big\{\bm{k}\in\RR^d \mid \Lambda(\bm{X}_T + \bm{k} \htimes \bm{S}_T\hdiv \bm{s}_0) \in \cA \big\} \,.
\end{align}
\end{definition}

We will refer to these measures also as \emph{monetary} measures of systemic risk, as an additional monetary amount $\bm{k}$ is added to the system.
This will make it easier to differentiate these measures from the intrinsic type defined in \cref{def:intr_meas_sys_risk}.

\medskip

In the following, we collect some important properties of $R_{\bm{S}}$.
The proofs can be found in \cref{appendix_proofs}. 
Note that these proofs generalise the corresponding proofs of the properties for the case $\bm{s}_0=\bm{1}$ and $\bm{S}_T=\bm{1}$ $\PP$-a.s.~which can be found in \cite{bib:AR} and \cite{bib:FRW}.
\begin{proposition} \label{prop:properties_syst_risk}
Let $\cA$, $\Lambda$ and $\bm{S}$ be as in \cref{def:setval_systemic_risk_measure}.
Then $R_{\bm{S}}\colon L^p_d \to \mathcal{P}(\RR^d)$ as defined in \eqref{eq:systRS} satisfies the following properties:
\begin{enumerate}[label=(\roman*)]
    \item \label{item_uppersets} \emph{Values of $R_{\bm{S}}$ are upper sets:} for all $\bm{X}_T\in L^p_d\colon R_{\bm{S}}(\bm{X}_T)=R_{\bm{S}}(\bm{X}_T)+\RR^d_+$.
    \item \label{item_S-add} $\bm{S}$\emph{-additivity:} for all $\bm{X}_T\in L^p_d$ and $\bm{\ell}\in\RR^d\colon R_{\bm{S}}(\bm{X}_T+\bm{\ell}\htimes\bm{S}_T\hdiv\bm{s}_0)=R_{\bm{S}}(\bm{X}_T)-\bm{\ell}$.
    \item \label{item_monoton} \emph{Monotonicity:} for any $\bm{X}_T, \bm{Y}_T \in  L^p_d$ with $\bm{X}_T \leq \bm{Y}_T \; \PP\text{-a.s.}\colon R_{\bm{S}}(\bm{X}_T)\subseteq R_{\bm{S}}(\bm{Y}_T)$.
\end{enumerate}
Moreover, if $\cA$ is a cone and $\Lambda$ is positively homogeneous, then $R_{\bm{S}}$ also satisfies 
\begin{enumerate}[label=(\roman*)]
\setcounter{enumi}{3}
    \item  \label{item_poshom} \emph{Positive homogeneity:} for all $\bm{X}_T\in L^p_d$ and $c>0 \colon R_{\bm{S}}(c\bm{X}_T) = c R_{\bm{S}}(\bm{X}_T)$.
\end{enumerate}
Finally, if $\cA$ is convex and $\Lambda$ concave, the following properties hold:
\begin{enumerate}[label=(\roman*)]
\setcounter{enumi}{4}
\item  \label{item_convex} \emph{Convexity:} for all $\bm{X}_T, \bm{Y}_T\in L^p_d$ and $\alpha\in[0,1]$ we have 
\begin{align*}
R_{\bm{S}}(\alpha \bm{X}_T +(1-\alpha)\bm{Y}_T)\supseteq \alpha R_{\bm{S}}(\bm{X}_T)+(1-\alpha) R_{\bm{S}}(\bm{Y}_T) \,,    
\end{align*}
\item  \label{item_convexvalues}\emph{$R_{\bm{S}}$ has convex values:} for all $\bm{X}_T \in L^p_d$ the set $R_{\bm{S}}(\bm{X}_T)$ is a convex set in $\RR^d$.
\end{enumerate}
\end{proposition}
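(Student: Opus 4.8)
The plan is to unwind the membership condition defining $R_{\bm{S}}$ in \eqref{eq:systRS} in each case and to exploit two structural observations repeatedly. First, for fixed $\bm{X}_T$ and $\bm{S}$, the map $\bm{k}\mapsto \bm{k}\htimes\bm{S}_T\hdiv\bm{s}_0$ is linear in $\bm{k}$ and sends $\RR^d_+$ into $(L^p_d)_+$, since $\bm{S}_T\geq 0$ $\PP$-a.s.\ and $\bm{s}_0\in\RR^d_{++}$. Second, both $\Lambda$ and $\cA$ are monotone. I would prove \ref{item_uppersets}--\ref{item_monoton} from these facts alone, \ref{item_poshom} using positive homogeneity of $\Lambda$ together with conicity \ref{propA:cone} of $\cA$, \ref{item_convex} using concavity of $\Lambda$ together with convexity \ref{propA:convex} and monotonicity \ref{propA:monotone} of $\cA$, and finally deduce \ref{item_convexvalues} from \ref{item_convex}.

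For \ref{item_uppersets}, the inclusion $\supseteq$ is immediate, and for $\subseteq$, given $\bm{k}\in R_{\bm{S}}(\bm{X}_T)$ and $\bm{v}\in\RR^d_+$, the first observation gives $\bm{X}_T+(\bm{k}+\bm{v})\htimes\bm{S}_T\hdiv\bm{s}_0\geq \bm{X}_T+\bm{k}\htimes\bm{S}_T\hdiv\bm{s}_0$ $\PP$-a.s.; applying monotonicity of $\Lambda$ and then \ref{propA:monotone} places the aggregated larger quantity in $\cA$, so $\bm{k}+\bm{v}\in R_{\bm{S}}(\bm{X}_T)$. Property \ref{item_S-add} is a pure change of variables: since $(\bm{\ell}+\bm{k})\htimes\bm{S}_T\hdiv\bm{s}_0=\bm{\ell}\htimes\bm{S}_T\hdiv\bm{s}_0+\bm{k}\htimes\bm{S}_T\hdiv\bm{s}_0$, the condition $\bm{k}\in R_{\bm{S}}(\bm{X}_T+\bm{\ell}\htimes\bm{S}_T\hdiv\bm{s}_0)$ is literally $\bm{k}+\bm{\ell}\in R_{\bm{S}}(\bm{X}_T)$. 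For \ref{item_monoton}, if $\bm{X}_T\leq\bm{Y}_T$ then adding the same eligible-asset term preserves the inequality, and the same monotonicity chain ($\Lambda$, then $\cA$) yields $R_{\bm{S}}(\bm{X}_T)\subseteq R_{\bm{S}}(\bm{Y}_T)$.

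For \ref{item_poshom} I would use the factorisation $c\bm{X}_T+\bm{k}\htimes\bm{S}_T\hdiv\bm{s}_0=c\big(\bm{X}_T+\tfrac{1}{c}\bm{k}\htimes\bm{S}_T\hdiv\bm{s}_0\big)$; positive homogeneity of $\Lambda$ pulls the factor $c>0$ outside, and conicity \ref{propA:cone} (applied with factors $c$ and $1/c$) shows the aggregated quantity lies in $\cA$ iff its rescaled version does, giving $\bm{k}\in R_{\bm{S}}(c\bm{X}_T)\Leftrightarrow \tfrac{1}{c}\bm{k}\in R_{\bm{S}}(\bm{X}_T)$, i.e.\ $R_{\bm{S}}(c\bm{X}_T)=cR_{\bm{S}}(\bm{X}_T)$. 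For \ref{item_convex}, take $\bm{k}_X\in R_{\bm{S}}(\bm{X}_T)$, $\bm{k}_Y\in R_{\bm{S}}(\bm{Y}_T)$ and set $\bm{k}=\alpha\bm{k}_X+(1-\alpha)\bm{k}_Y$. Linearity of $\bm{k}\mapsto\bm{k}\htimes\bm{S}_T\hdiv\bm{s}_0$ yields the key identity
\begin{align*}
(\alpha\bm{X}_T+(1-\alpha)\bm{Y}_T)+\bm{k}\htimes\bm{S}_T\hdiv\bm{s}_0 = \alpha\,A+(1-\alpha)\,B,
\end{align*}
with $A=\bm{X}_T+\bm{k}_X\htimes\bm{S}_T\hdiv\bm{s}_0$ and $B=\bm{Y}_T+\bm{k}_Y\htimes\bm{S}_T\hdiv\bm{s}_0$. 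Concavity of $\Lambda$ gives $\Lambda(\alpha A+(1-\alpha)B)\geq\alpha\Lambda(A)+(1-\alpha)\Lambda(B)$ $\PP$-a.s.; convexity \ref{propA:convex} puts the right-hand side in $\cA$, and monotonicity \ref{propA:monotone} then absorbs the inequality, so $\bm{k}\in R_{\bm{S}}(\alpha\bm{X}_T+(1-\alpha)\bm{Y}_T)$. Finally, \ref{item_convexvalues} is the special case $\bm{X}_T=\bm{Y}_T$ of \ref{item_convex}.

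The change-of-variables and monotonicity chasing are routine. The one genuinely load-bearing point is the interaction in \ref{item_convex}: concavity of $\Lambda$ only produces an inequality after pushing the convex combination through the aggregation, so convexity of $\cA$ alone does not close the argument — it is monotonicity \ref{propA:monotone} that converts $\Lambda(\alpha A+(1-\alpha)B)\geq\alpha\Lambda(A)+(1-\alpha)\Lambda(B)$ into membership. I would therefore stress that convexity of the systemic risk measure needs \emph{both} concavity of $\Lambda$ and the monotonicity built into the acceptance set, and I expect this to be the only step requiring care; everything else follows directly from the nonnegativity of $\bm{S}_T$ and the linearity of the eligible-asset allocation in $\bm{k}$.
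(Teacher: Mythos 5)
Your proposal is correct and follows essentially the same route as the paper's own proof: the same change of variables for $\bm{S}$-additivity, the same monotonicity chain for \ref{item_monoton}, the same factorisation with positive homogeneity of $\Lambda$ and conicity of $\cA$ for \ref{item_poshom}, and the same interplay of concavity of $\Lambda$, convexity of $\cA$, and monotonicity of $\cA$ for \ref{item_convex} --- including your correctly emphasised point that monotonicity of $\cA$ is needed to absorb the inequality produced by concavity. The only cosmetic differences are that you prove the upper-set property \ref{item_uppersets} directly (the paper derives it by combining \ref{item_S-add} and \ref{item_monoton}, using the same nonnegativity of $\bm{v}\htimes\bm{S}_T\hdiv\bm{s}_0$) and that you obtain \ref{item_convexvalues} as the formal special case $\bm{X}_T=\bm{Y}_T$ of \ref{item_convex} (the paper restates the computation), both of which are equivalent arguments.
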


The properties in \cref{prop:properties_syst_risk} show that risk measures as in \cref{eq:systRS} are a natural generalisation of univariate monetary risk measures.
Properties \ref{item_S-add}-\ref{item_convex} can be interpreted in analogy to the corresponding properties of scalar monetary risk measures.
Properties \ref{item_uppersets} and \ref{item_convexvalues} are particularly useful for the numerical approximation of the values.
Furthermore, the upper set property allows the efficient communication of risk measurements $R_{\bm{S}}(\bm{X}_T)$ through efficient cash invariant allocation rules (EARs).
For more details see \citep{bib:FRW}.
In addition to these properties, the set $R_{\bm{S}}(\bm{X}_T)$ is closed whenever $\Lambda$ is continuous and $\cA$ is closed, see also \citep[Lemma 2.4 (iii)]{bib:FRW}.
Some of these properties will be discussed in more detail in \cref{sec:intrinsic_systemic_risk}.
\medskip

In analogy to univariate monetary risk measures, the defining management action dictates the institutions in the financial system to adjust their capital holdings by raising capital and investing it in a prespecified asset such that the system can be deemed acceptable.
In the following section, we will explore a different management action without the use of external capital.

\section{Intrinsic measures of systemic risk} \label{sec:intrinsic_systemic_risk}

In this section, we introduce the novel intrinsic measures of systemic risk.
We continue working in the framework of an interconnected network of $d \geq 2$ participants. 
However, in contrast to monetary measures of systemic risk as defined in \cref{def:setval_systemic_risk_measure}, we will not rely on external capital. 
Instead, each participant of the network needs to improve their own position by shifting it towards a specified eligible position. 
A regulatory authority can specify these eligible positions or restrict their choice to certain classes of assets for each participant individually. In this paper, we only discuss the first case, i.e.\ the case where the regulator specifies one eligible position for each participant in the network.
A network in which each participant holds only eligible positions should be acceptable when aggregated, as described in \cref{prop:risk_not_empty}.

In the following section, we define the intrinsic systemic risk measures and derive and discuss their most important properties.

\subsection{Intrinsic measures of systemic risk and their properties}
As in \cref{sec:measures_systemic_risk}, we collect eligible assets and their initial values in a tuple of vectors $\bm{S}=(\bm{s}_0,\bm{S}_T)\in\RR^d_{++}\times (L^p_d)_+$. 
In addition to the random vector $\bm{X}_T$, we also need the initial values of each of the participants' future values, $\bm{x}_0$,
so we extend financial positions to tuples $\bm{X} = (\bm{x}_0, \bm{X}_T) \in \RR^d_{++} \times (L^p_d)_+$.

The shift from the current position towards an eligible position can mathematically be expressed as a convex combination of two random variables. 
Since each participant's position can be altered individually, we extend the notation introduced in \cref{eq:intrinsic_position} element-wise to a multivariate random variable.
For a financial network with a vector of endowments $\bm{X} = (\bm{x}_0, \bm{X}_T) \in \RR^d_{++} \times (L^p_d)_+$, a collection of eligible assets $\bm{S} = (\bm{s}_0, \bm{S}_T) \in \RR^d_{++} \times (L^p_d)_+$, and a vector of coefficients $\bm{\lambda} \in [0,1]^d$ we define the random vector
\begin{align*}
\bm{X}_T^{\bm{\lambda},\bm{S}} &= (1-\bm{\lambda}) \htimes \bm{X}_T  + \bm{\lambda} \htimes \bm{x}_0 \htimes \bm{S}_T \hdiv \bm{s}_0
=\Big( (X^1_T)^{\lambda^1, S^1} \,, \, \ldots \, , \,(X^d_T)^{\lambda^d, S^d}  \Big)^\intercal \in (L^p_d)_+ \,.
\end{align*}
This is the element-wise convex combination of $\XT$ and $\xo \htimes \ST \hdiv \so$ given by the coefficients collected in $\bm{\lambda}$.
In this framework, each participant's financial position $X_T^1, \ldots, X_T^d$ can be altered by a different fraction $\lambda^1, \ldots, \lambda^d$ and by using a different eligible asset $S^1_T, \ldots, S^d_T$, respectively.
Assuming a choice for all $d$ eligible assets has been made, we aim to find all vectors $\bm{\lambda} \in [0,1]^d$ such that the aggregated position $\Lambda(\bm{X}_T^{\bm{\lambda}, \bm{S}})$ belongs to the acceptance set $\cA$.

\medskip

\begin{definition} \label{def:intr_meas_sys_risk}
Let $\Lambda$ be an aggregation function and $\cA$ an acceptance set. 
Let $\bm{S} \in\RR^d_{++}\times (L^p_d)_+$ be a vector-valued eligible asset. 
An \emph{intrinsic measure of systemic risk} is a map  $R_{\bm{S}}^\mathrm{int} \colon \RR^d_{++} \times (L^p_d)_+ \rightarrow \mathcal{P}([0,1]^d)$ defined as
\begin{equation}
R_{\bm{S}}^{\mathrm{int}}(\bm{X}) =  \{ \bm{\lambda} \in [0,1]^d \mid \Lambda( \bm{X}_T^{\bm{\lambda} , \bm{S}} ) \in \cA \} \,.  
\label{eq:set_int}
\end{equation}
\end{definition}

It is important to note that we define intrinsic systemic risk measures on $\RR^d_{++} \times (L^p_d)_+$.
This means $X_T^k$ represents the future value of the asset side of the balance sheet of institution $k$. 
In particular, $\bm{X}_T$ has non-negative values.

\begin{remark}\label{rem:X_positive}
The choice of non-negative $\bm{X}_T$ allows for the most useful operational interpretation of the intrinsic systemic risk measure.
In the context of the element-wise convex combination $\bm{X}_T^{\bm{\lambda},\bm{S}}$, the term 
$(1-\bm{\lambda}) \htimes \bm{X}_T$ should be interpreted as the future value of a system $\bm{X}_T$ after a fraction $\bm{\lambda}$ has been sold.
In general, it would not have the intended operational interpretation if $\bm{X}_T$ denoted the net worth of the institutions, that is, assets minus liabilities, as liabilities would also be scaled.
Only in specific situations, for example with operational costs that are reduced when assets are sold, this interpretation may be accurate.
For this reason, we consider assets and liabilities separately. 
Since in the multivariate framework we make use of aggregation functions, we can incorporate liabilities through them, or if not possible, through the acceptance set.
Network models as studied in \cref{sec:network} go hand in hand with our framework.
This stands in contrast to univariate intrinsic risk measures, where such a restriction is not imposed.
There, if one wants to split assets and liabilities, liabilities need to be incorporated via the acceptance set.
\end{remark}

\medskip

As a technical remark, we note that $R_{\bm{S}}^{\mathrm{int}}$ is always well-defined, since $\emptyset \in \mathcal{P}([0,1]^d)$ is vacuously true.
This is different for the univariate intrinsic risk measure in \cref{intRS}, where $\rho^{\mathrm{int}}_{\cA, S}$ is only well defined if $\frac{x_0}{s_0}S_T\in\cA$, see \cite[below Definition 3.1]{bib:FSint}.
However, an empty risk measure has a similar meaning as the value $+\infty$ for univariate monetary risk measures, namely that the choices of $\cA$, $\Lambda$, and $\bm{S}$ cannot yield an acceptable system. 

Since the aggregation function and the acceptance set can be thought of as restrictions on the system imposed by a regulatory authority, we assume that these objects are given and have certain properties.
In this case, we must choose suitable eligible assets to ensure that $R_{\bm{S}}^{\mathrm{int}}(\bm{X})$ is not an empty set.

\begin{proposition} \label{prop:risk_not_empty}
    Let $\Lambda$ be an aggregation function and let $\cA$ be an acceptance set.
    Let $\bm{X}$ be an unacceptable system in the sense that $\Lambda(\bm{X}_T) \notin \cA$.
    If the eligible asset $\bm{S}$ satisfies $\Lambda(\bm{x}_0 \htimes \bm{S}_T \hdiv \bm{s}_0) \in \cA$, then $\RintX\neq\emptyset$.
\end{proposition}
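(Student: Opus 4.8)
The plan is to prove non-emptiness by exhibiting a single explicit element of $\RintX$, which suffices by definition. The natural candidate is the all-ones vector $\bm{\lambda} = \bm{1} = (1,\ldots,1)^\intercal \in [0,1]^d$. Operationally this corresponds to the extreme management action in which every participant liquidates their entire current position and reinvests the full proceeds into the prescribed eligible asset, so that the altered system coincides exactly with the pure eligible position.

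First I would substitute $\bm{\lambda} = \bm{1}$ into the defining convex combination. Since $(1 - \bm{1}) = \bm{0}$, the term $(1-\bm{1}) \htimes \bm{X}_T$ vanishes identically and only the eligible term survives, yielding
\begin{align*}
\bm{X}_T^{\bm{1},\bm{S}} = (1-\bm{1}) \htimes \bm{X}_T + \bm{1} \htimes \bm{x}_0 \htimes \bm{S}_T \hdiv \bm{s}_0 = \bm{x}_0 \htimes \bm{S}_T \hdiv \bm{s}_0 \,.
\end{align*}
Applying the aggregation function to both sides and invoking the hypothesis $\Lambda(\bm{x}_0 \htimes \bm{S}_T \hdiv \bm{s}_0) \in \cA$, I obtain $\Lambda(\bm{X}_T^{\bm{1},\bm{S}}) \in \cA$. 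By \cref{def:intr_meas_sys_risk}, this is precisely the membership condition for $\bm{1}$, so $\bm{1} \in \RintX$ and therefore $\RintX \neq \emptyset$.

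There is no genuine obstacle here; the argument is a one-line verification once the correct candidate is identified. It is worth remarking that the unacceptability assumption $\Lambda(\bm{X}_T) \notin \cA$ is never used: it only fixes the regime in which the statement is of practical interest (an unacceptable system that can nonetheless be repaired via the intrinsic action), but the conclusion holds for any system. The only point that deserves a moment of care is the correct cancellation of the Hadamard operations at $\bm{\lambda} = \bm{1}$, which is routine.
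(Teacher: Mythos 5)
Your proposal is correct and is essentially identical to the paper's own proof: both verify that $\bm{X}_T^{\bm{1},\bm{S}} = \bm{x}_0 \htimes \bm{S}_T \hdiv \bm{s}_0$ and conclude $\bm{1} \in \RintX$ directly from the hypothesis. Your additional observation that the unacceptability assumption $\Lambda(\bm{X}_T) \notin \cA$ is never used is also accurate.
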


\begin{proof}
    Notice that $\Lambda(\bm{x}_0 \htimes \bm{S}_T \hdiv \bm{s}_0) = \Lambda(\bm{X}_T^{\bm{1},\bm{S}})$ and hence, $\bm{1} \in \RintX$.
\end{proof}

This requirement makes sense intuitively.
A system in which each agent has fully invested in their eligible asset needs to be acceptable.
This system corresponds to the coefficient vector $\bm{1}$ and marks the end point of the path from $\bm{X}_T$ to $\bm{x}_0 \htimes \bm{S}_T \hdiv \bm{s}_0$.
However, the condition $\Lambda(\bm{x}_0 \htimes \bm{S}_T \hdiv \bm{s}_0) \in \cA$ is not necessary for $\RintX\neq\emptyset$, as can be seen in \cref{fig:intr_rm_different_rhos_S}.

\bigskip

In the following, we collect basic properties of intrinsic measures of systemic risk.
To this end, let $\Lambda$ be an aggregation function and let $\cA$ be an acceptance set.

\begin{proposition}[Monotonicity] \label{claim:systemic_intrinsic_el_monotonicity}
$R_{\bm{S}}^{\mathrm{int}}$ is monotonic in the sense that if $\bm{x}_0\leq \bm{y}_0$ and $\bm{X}_T\leq \bm{Y}_T$ $\PP$-a.s., then $R_{\bm{S}}^{\mathrm{int}}(\bm{X}) \subseteq R_{\bm{S}}^{\mathrm{int}}(\bm{Y})$.
\begin{proof}
Let $\bm{x}_0\leq \bm{y}_0$ and $\bm{X}_T\leq \bm{Y}_T$, and take $\bm{\lambda} \in R_{\bm{S}}^{\mathrm{int}}(\bm{X})$. 
Notice that 
\begin{align*}
(1-\bm{\lambda}) \htimes \bm{X}_T + \bm{\lambda} \htimes \bm{x}_0 \htimes \bm{S}_T \hdiv \bm{s}_0  \leq (1-\bm{\lambda}) \htimes \bm{Y}_T + \bm{\lambda} \htimes \bm{y}_0 \htimes \bm{S}_T \hdiv \bm{s}_0 \,.
\end{align*}
Since $\Lambda$ is non-decreasing, the assertion follows by the monotonicity of $\cA$.
\end{proof}
\end{proposition}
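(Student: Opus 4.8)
The plan is to show the set inclusion $R_{\bm{S}}^{\mathrm{int}}(\bm{X}) \subseteq R_{\bm{S}}^{\mathrm{int}}(\bm{Y})$ by taking an arbitrary $\bm{\lambda}$ in the left-hand set and verifying it lies in the right-hand set. By \cref{def:intr_meas_sys_risk}, membership $\bm{\lambda} \in R_{\bm{S}}^{\mathrm{int}}(\bm{X})$ means precisely $\Lambda(\bm{X}_T^{\bm{\lambda},\bm{S}}) \in \cA$, so the whole argument reduces to showing $\Lambda(\bm{Y}_T^{\bm{\lambda},\bm{S}}) \in \cA$ under the hypotheses $\bm{x}_0 \leq \bm{y}_0$ and $\bm{X}_T \leq \bm{Y}_T$.

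The key chain of reasoning has three links. First I would establish the pointwise (a.s.) inequality $\bm{X}_T^{\bm{\lambda},\bm{S}} \leq \bm{Y}_T^{\bm{\lambda},\bm{S}}$ between the two altered positions. Writing out both convex combinations,
\begin{align*}
\bm{X}_T^{\bm{\lambda},\bm{S}} = (1-\bm{\lambda}) \htimes \bm{X}_T + \bm{\lambda} \htimes \bm{x}_0 \htimes \bm{S}_T \hdiv \bm{s}_0, \qquad
\bm{Y}_T^{\bm{\lambda},\bm{S}} = (1-\bm{\lambda}) \htimes \bm{Y}_T + \bm{\lambda} \htimes \bm{y}_0 \htimes \bm{S}_T \hdiv \bm{s}_0,
\end{align*}
the inequality follows componentwise: in the first summand $(1-\lambda^k) \geq 0$ combines with $X_T^k \leq Y_T^k$, and in the second summand $\lambda^k \geq 0$ together with the non-negativity of $S_T^k / s_0^k$ lets $x_0^k \leq y_0^k$ push the term upward. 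Second, I apply that $\Lambda$ is non-decreasing to transfer this vector inequality to the scalar inequality $\Lambda(\bm{X}_T^{\bm{\lambda},\bm{S}}) \leq \Lambda(\bm{Y}_T^{\bm{\lambda},\bm{S}})$ $\PP$-a.s. Third, since $\Lambda(\bm{X}_T^{\bm{\lambda},\bm{S}}) \in \cA$ by assumption, the monotonicity axiom \ref{propA:monotone} of the acceptance set yields $\Lambda(\bm{Y}_T^{\bm{\lambda},\bm{S}}) \in \cA$, i.e. $\bm{\lambda} \in R_{\bm{S}}^{\mathrm{int}}(\bm{Y})$.

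I do not expect any genuine obstacle here; the result is a clean consequence of monotonicity propagating through three layers (the convex combination, the aggregation function, and the acceptance set). The one point deserving minor care is the non-negativity of the eligible-asset return term: since $\bm{S}_T \in (L^p_d)_+$ and $\bm{s}_0 \in \RR^d_{++}$, each factor $S_T^k / s_0^k \geq 0$ $\PP$-a.s., which is exactly what is needed to conclude $\lambda^k \, x_0^k \, S_T^k / s_0^k \leq \lambda^k \, y_0^k \, S_T^k / s_0^k$ from $x_0^k \leq y_0^k$. This is guaranteed by the standing assumptions on $\bm{S}$ in \cref{def:intr_meas_sys_risk}, so the displayed inequality in the statement is justified, and the remaining two steps are immediate from the definitions of aggregation function and acceptance set.
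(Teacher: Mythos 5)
Your proof is correct and follows essentially the same route as the paper's: establish the componentwise inequality $\bm{X}_T^{\bm{\lambda},\bm{S}} \leq \bm{Y}_T^{\bm{\lambda},\bm{S}}$, then pass it through the monotonicity of $\Lambda$ and of $\cA$. The only difference is that you spell out why the inequality holds (in particular the role of $\bm{S}_T \in (L^p_d)_+$ and $\bm{s}_0 \in \RR^d_{++}$), which the paper leaves implicit behind its ``Notice that''.
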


\medskip

\begin{proposition}[Quasi-convexity] \label{claim:systemic_intrinsic_quasi-convex}
If $\cA$ is convex and $\Lambda$ is concave, then $R^{\mathrm{int}}_{\bm{S}}$ is quasi-convex, that is, for all $\alpha \in [0,1]$ and for all $\bm{X},\bm{Y}\in\RR^d_{++}\times (L^p_d)_+$ we have
\begin{align} \label{eq:setvalued_quasi_convex}
R_{\bm{S}}^{\mathrm{int}}(\bm{X}) \cap R_{\bm{S}}^{\mathrm{int}}(\bm{Y}) &\subseteq R_{\bm{S}}^{\mathrm{int}}(\alpha \bm{X} + (1-\alpha)\bm{Y}) \,.
\end{align}
\begin{proof}
If $R_{\bm{S}}^{\mathrm{int}}(\bm{X}) \cap R_{\bm{S}}^{\mathrm{int}}(\bm{Y})=\emptyset$, there is nothing to prove. 
Assume now that $R_{\bm{S}}^{\mathrm{int}}(\bm{X}) \cap R_{\bm{S}}^{\mathrm{int}}(\bm{Y})$ is not empty.
Take any $\bm{\lambda} \in R_{\bm{S}}^{\mathrm{int}}(\bm{X}) \cap R_{\bm{S}}^{\mathrm{int}}(\bm{Y})$, then $\Lambda\big( \bm{X}_T^{\bm{\lambda} , \bm{S}}  \big)$ and $\Lambda\big( \bm{Y}_T^{\bm{\lambda} , \bm{S}}  \big)$ are contained in $\cA$.
Notice that 
\begin{align*}
    (1 - \bm{\lambda})& \htimes (\alpha \bm{X}_T + (1-\alpha) \bm{Y}_T )   + \bm{\lambda} \htimes (\alpha \bm{x}_0 + (1-\alpha)\bm{y}_0 ) \htimes \bm{S}_T \hdiv \bm{s}_0 \\
    &=\alpha \bm{X}_T^{\bm{\lambda} , \bm{S}} + (1-\alpha)\bm{Y}_T^{\bm{\lambda} , \bm{S}} \,.
\end{align*}
By convexity of $\cA$, the convex combination $\alpha \Lambda\big(\bm{X}_T^{\bm{\lambda} , \bm{S}}\big) + (1-\alpha)\Lambda\big(\bm{Y}_T^{\bm{\lambda} , \bm{S}}\big)$ is contained in $\cA$ and by monotonicity of $\cA$ and concavity of $\Lambda$, also $\Lambda\big( \alpha \bm{X}_T^{\bm{\lambda} , \bm{S}} + (1-\alpha)\bm{Y}_T^{\bm{\lambda} , \bm{S}}\big) \in \cA$.
Hence, $\bm{\lambda} \in R_{\bm{S}}^{\mathrm{int}}(\alpha \bm{X} + (1-\alpha)\bm{Y})$.
\end{proof}
\end{proposition}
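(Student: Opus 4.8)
The plan is to exploit the fact that, for a fixed fraction vector $\bm{\lambda}$ and fixed eligible assets $\bm{S}$, the map sending an endowment $\bm{X}=(\bm{x}_0,\bm{X}_T)$ to the altered position $\bm{X}_T^{\bm{\lambda},\bm{S}}$ is affine in $\bm{X}$. First I would dispose of the trivial case: if $R_{\bm{S}}^{\mathrm{int}}(\bm{X}) \cap R_{\bm{S}}^{\mathrm{int}}(\bm{Y}) = \emptyset$, the inclusion \eqref{eq:setvalued_quasi_convex} holds vacuously. So I would assume the intersection is nonempty and fix any $\bm{\lambda}$ in it; by definition this means both $\Lambda(\bm{X}_T^{\bm{\lambda},\bm{S}}) \in \cA$ and $\Lambda(\bm{Y}_T^{\bm{\lambda},\bm{S}}) \in \cA$. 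The goal is to show that $\bm{\lambda} \in R_{\bm{S}}^{\mathrm{int}}(\alpha \bm{X}+(1-\alpha)\bm{Y})$, i.e. that applying $\Lambda$ to the altered position built from the convex combination $\alpha\bm{X}+(1-\alpha)\bm{Y}$ again lands in $\cA$.

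The key algebraic step is the identity $(\alpha\bm{X}+(1-\alpha)\bm{Y})_T^{\bm{\lambda},\bm{S}} = \alpha\, \bm{X}_T^{\bm{\lambda},\bm{S}} + (1-\alpha)\,\bm{Y}_T^{\bm{\lambda},\bm{S}}$, which follows by substituting the combined endowment $\big(\alpha\bm{x}_0+(1-\alpha)\bm{y}_0,\ \alpha\bm{X}_T+(1-\alpha)\bm{Y}_T\big)$ into the definition of the altered position and distributing the Hadamard products, the scalars $\alpha$ and $1-\alpha$ passing through $\htimes$ and $\hdiv$ componentwise. With this identity in hand, I would combine the convexity of $\cA$ and the concavity of $\Lambda$ in two separate steps rather than attempting a single application.

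Concretely: by convexity of $\cA$ (property \ref{propA:convex}), the convex combination $\alpha\,\Lambda(\bm{X}_T^{\bm{\lambda},\bm{S}}) + (1-\alpha)\,\Lambda(\bm{Y}_T^{\bm{\lambda},\bm{S}})$ of the two accepted scalar random variables again belongs to $\cA$. Separately, concavity of $\Lambda$ gives the $\PP$-a.s.\ inequality $\Lambda\big(\alpha\,\bm{X}_T^{\bm{\lambda},\bm{S}} + (1-\alpha)\,\bm{Y}_T^{\bm{\lambda},\bm{S}}\big) \geq \alpha\,\Lambda(\bm{X}_T^{\bm{\lambda},\bm{S}}) + (1-\alpha)\,\Lambda(\bm{Y}_T^{\bm{\lambda},\bm{S}})$. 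Finally I would invoke monotonicity of the acceptance set (property \ref{propA:monotone}): since the left-hand side dominates a member of $\cA$, it is itself in $\cA$. Together with the algebraic identity this yields $\Lambda\big((\alpha\bm{X}+(1-\alpha)\bm{Y})_T^{\bm{\lambda},\bm{S}}\big) \in \cA$, which is precisely $\bm{\lambda}\in R_{\bm{S}}^{\mathrm{int}}(\alpha\bm{X}+(1-\alpha)\bm{Y})$.

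The step I expect to require the most care is the interplay that forces the two-stage argument: one cannot apply convexity of $\cA$ directly to $\Lambda$ of the combination, because $\Lambda$ is concave rather than affine, so $\Lambda$ of a convex combination need not equal the convex combination of the $\Lambda$-values. The resolution is exactly the split above — convexity of $\cA$ places the convex combination of the $\Lambda$-values inside $\cA$, and then concavity of $\Lambda$ together with monotonicity of $\cA$ upgrades this to the aggregate of the convex combination. This is also why the hypothesis needs $\Lambda$ concave and $\cA$ convex \emph{simultaneously}: dropping either one breaks the chain, and it explains why the conclusion is quasi-convexity of $R_{\bm{S}}^{\mathrm{int}}$ (the inclusion \eqref{eq:setvalued_quasi_convex}) rather than the stronger convexity enjoyed by the monetary measure $\Rmon$ in \cref{prop:properties_syst_risk}.
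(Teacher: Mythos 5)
Your proposal is correct and follows essentially the same route as the paper's own proof: the same dispatch of the empty-intersection case, the same algebraic identity $(\alpha\bm{X}+(1-\alpha)\bm{Y})_T^{\bm{\lambda},\bm{S}} = \alpha\,\bm{X}_T^{\bm{\lambda},\bm{S}} + (1-\alpha)\,\bm{Y}_T^{\bm{\lambda},\bm{S}}$, and the same two-stage argument combining convexity of $\cA$ with concavity of $\Lambda$ and monotonicity of $\cA$. No gaps; your closing remark on why both hypotheses are needed simultaneously is a fair gloss on the argument's structure.
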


Property \eqref{eq:setvalued_quasi_convex} is a set-valued version of quasi-convexity. 
The intersection is the set-valued counterpart of a maximum and the subset relation corresponds to the ordering relation $\geq$.
Monotonicity and quasi-convexity are the most important properties a risk measure should satisfy.
Monotonicity implies that all the management actions that make a system of assets $\bm{X}$ acceptable will also make a system $\bm{Y}$ with larger asset values acceptable.
This is also important from a modelling perspective, as the management actions resulting from overestimating risk will not have an adverse effect on the systemic risk. 
Quasi-convexity implements the notion of the diversification principle.
In our set-valued framework, this means that management actions which make both systems $\bm{X}$ and $\bm{Y}$ acceptable will also make any convex combination of them acceptable.

\medskip

We turn to show two regularity properties which are of particular value for the numerical approximation of intrinsic systemic risk measures.

\begin{proposition}[Closed values]
    Let $\Lambda$ be continuous and let $\cA$ be closed.
    Then $\Rint$ has closed values, that is, for all $\bm{X} \in \RR^d_{++} \times (L^p_d)_+$ the set $\RintX$ is a closed subset of $[0,1]^d$. 
\end{proposition}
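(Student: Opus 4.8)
The plan is to realise $\RintX$ as the preimage of the closed acceptance set $\cA$ under a continuous map, so that closedness becomes automatic. Concretely, I would fix $\bm{X}=(\xo,\XT)$ and introduce the map $\Phi\colon[0,1]^d\to L^p$, $\Phi(\bm{\lambda})=\Lambda(\Xlambda)$, so that $\RintX=\Phi^{-1}(\cA)$ by \eqref{eq:set_int}. Since the domain $[0,1]^d$ is a metric space, it suffices to argue sequentially: given $\bm{\lambda}_n\in\RintX$ with $\bm{\lambda}_n\to\bm{\lambda}\in[0,1]^d$, I would show that $\Phi(\bm{\lambda}_n)\to\Phi(\bm{\lambda})$ in the topology for which $\cA$ is closed, and then conclude $\Phi(\bm{\lambda})\in\cA$, i.e.\ $\bm{\lambda}\in\RintX$, because a closed set contains the limits of its convergent sequences.

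The core step is therefore the continuity of $\Phi$. Since $\bm{\lambda}\mapsto\Xlambda$ is affine in each coordinate, $\Xlambdan(\omega)\to\Xlambda(\omega)$ for a.e.\ $\omega$, and continuity of $\Lambda$ then gives $\Lambda(\Xlambdan)\to\Lambda(\Xlambda)$ pointwise a.s. To upgrade this to convergence in $L^p$ for $p\in[1,\infty)$, I would use the two structural features of the intrinsic setting. Because $\XT\geq 0$ and $\xo\htimes\ST\hdiv\so\geq 0$, each $\Xlambdan$ is a componentwise convex combination of non-negative vectors and hence is dominated a.s.\ by the fixed vector $\bm{Z}\coloneqq\XT+\xo\htimes\ST\hdiv\so\in(L^p_d)_+$. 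Monotonicity of $\Lambda$ yields $\Lambda(\bm{0})\leq\Lambda(\Xlambdan)\leq\Lambda(\bm{Z})$ a.s., so $|\Lambda(\Xlambdan)|\leq|\Lambda(\bm{0})|+|\Lambda(\bm{Z})|$, an $n$-independent dominating function lying in $L^p$. The dominated convergence theorem in its $L^p$ form then delivers $\Phi(\bm{\lambda}_n)\to\Phi(\bm{\lambda})$ in $L^p$, and since $\cA$ is norm-closed the claim follows in this case.

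For $p=\infty$ the acceptance set is closed only for $\sigma(L^\infty,L^1)$, which needs slightly more care. Here I would instead establish continuity of $\Phi$ for the $L^\infty$-norm directly: from $|(X^k_T)^{\lambda_n^k,S^k}-(X^k_T)^{\lambda^k,S^k}|=|\lambda_n^k-\lambda^k|\,|\tfrac{x_0^k}{s_0^k}S^k_T-X^k_T|\leq|\lambda_n^k-\lambda^k|\,Z^k$ one gets $\|\Xlambdan-\Xlambda\|_{L^\infty_d}\to 0$, so the arguments converge in sup-norm while remaining in the essentially compact box $K=\prod_k[0,\|Z^k\|_{L^\infty}]$, on which $\Lambda$ is uniformly continuous. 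Hence $\Lambda(\Xlambdan)\to\Lambda(\Xlambda)$ in $L^\infty$-norm; norm convergence implies weak$^*$ convergence, and a weak$^*$-closed set contains the limits of its weak$^*$-convergent sequences, so again $\Phi(\bm{\lambda})\in\cA$.

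The main obstacle is precisely the passage from a.s.\ pointwise convergence to convergence in $L^p$: this is where non-negativity of $\XT$ together with monotonicity of $\Lambda$ becomes indispensable, as it produces the uniform dominating function $|\Lambda(\bm{0})|+|\Lambda(\bm{Z})|$, which belongs to $L^p$ under the standing assumption (implicit in the well-posedness of the measure relative to $\cA\subset L^p$) that $\Lambda$ maps $L^p_d$ into $L^p$. Without a domination or growth control on $\Lambda$, pointwise convergence alone would be insufficient. A secondary, purely topological subtlety is the $p=\infty$ case, which is handled above by observing that norm convergence is strong enough to reach the weak$^*$-closed acceptance set.
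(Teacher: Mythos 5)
Your proof is correct, and it follows the same sequential skeleton as the paper's: take $(\bm{\lambda}_n)\subset\RintX$ with $\bm{\lambda}_n\to\bm{\lambda}$, show $\Lambda(\Xlambdan)\to\Lambda(\Xlambda)$ in the topology for which $\cA$ is closed, and conclude by closedness. The difference lies in how the key step is justified. The paper first shows norm convergence of the arguments, $\Vert \Xlambdan - \Xlambda\Vert_{L^p_d}\leq \max_{k} \big\Vert X_T^k - x_0^k S_T^k / s_0^k\big\Vert_{L^p}\,\vert\bm{\lambda}-\bm{\lambda}_n\vert_p\to 0$, and then invokes ``continuity of $\Lambda$'' to pass to $\Lambda(\Xlambdan)\to\Lambda(\Xlambda)$ --- implicitly treating the composition (Nemytskii) operator $\bm{Y}\mapsto\Lambda(\bm{Y})$ as continuous from $L^p_d$ to $L^p$, which does not follow from continuity of $\Lambda$ on $\RR^d$ alone. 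Your argument addresses exactly this point: for $p<\infty$ you work with a.s.\ pointwise convergence and produce the $n$-independent dominating function $|\Lambda(\bm{0})|+|\Lambda(\bm{Z})|$ from non-negativity of the positions and monotonicity of $\Lambda$ (together with the standing assumption that $\Lambda$ maps $L^p_d$ into $L^p$, needed so that $\Lambda(\bm{Z})\in L^p$), so dominated convergence upgrades to $L^p$-convergence; for $p=\infty$ you obtain sup-norm convergence via uniform continuity of $\Lambda$ on the essentially compact box and observe that norm convergence suffices to reach the $\sigma(L^\infty,L^1)$-closed acceptance set. What your route buys is rigour precisely where the paper's proof is glossed --- the passage from convergence of $\Xlambdan$ to convergence of $\Lambda(\Xlambdan)$, and the correct handling of weak$^*$ closedness in the bounded case; it is also the only place where the structural hypotheses $\XT\geq 0$ and monotonicity of $\Lambda$ genuinely enter. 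What the paper's route buys is brevity, at the cost of an unstated continuity assumption on the induced operator.
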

\begin{proof}
    Let $\RintX \neq \emptyset$ and let $(\bm{\lambda}_n)_{n \in \NN} \subset \RintX$ be a sequence that converges to some $\bm{\lambda} \in [0,1]^d$.
    Notice that $\Xlambdan - \Xlambda = (\bm{\lambda} - \bm{\lambda}_n) \htimes (\XT - \xo \htimes \ST \hdiv \so )$. 
    This gives us 
    \begin{align*}
        \Vert \Xlambdan - \Xlambda \Vert_{L^p_d}
        \leq \max_{k \in \{1,\ldots,d\}} \Big\Vert X_T^k - \frac{x_0^k S_T^k}{s_0^k} \Big\Vert_{L^p}  \cdot \vert \bm{\lambda} - \bm{\lambda}_n \vert_{p} \,.
    \end{align*}
    Since $\Lambda$ is continuous, we get a sequence $\Lambda(\Xlambdan)_{n\in\NN} \subset \cA$ that converges to $\Lambda(\Xlambda)$.
    Since $\cA$ is closed, the limit $\Lambda(\Xlambda)$ is also contained in $\cA$.
    Hence $\bm{\lambda} \in \RintX$.
\end{proof}

Many important examples of acceptance sets are closed, such as the ones associated with Value-at-Risk and Expected Shortfall. 
While in the financial literature closedness is often required to simplify mathematical technicalities, it has also a financial relevance.
Closedness of the acceptance set prevents unacceptable positions to become acceptable through arbitrarily small perturbations, see also \citep[Section 2.2.3]{bib:Munari}.
With continuity of $\Lambda$ this interpretation translates to the set-valued framework.

\medskip

\begin{proposition}[Convex values] \label{claim:riskmeasuresets_are_convex}
If $\Lambda$ is concave and $\cA$ is convex, then for any $\bm{X}\in\RR^d_{++}\times L^p_d$ the set $R_{\bm{S}}^{\mathrm{int}}(\bm{X})$ is convex.
\begin{proof}
Assume that $R_{\bm{S}}^{\mathrm{int}}(\bm{X})$ is not empty, or else there is nothing to show. 
Let $\bm{\lambda}_1, \bm{\lambda}_2 \in R_{\bm{S}}^{\mathrm{int}}(\bm{X})$ and for $\alpha \in [0,1]$ define $\bm{\lambda}_\alpha = \alpha \bm{\lambda}_1 + (1-\alpha) \bm{\lambda}_2 \in [0,1]^d$.
Notice that 
\begin{align*}
\bm{X}_T^{\bm{\lambda}_\alpha,S} &= (1 - \alpha \bm{\lambda}_1 - (1-\alpha) \bm{\lambda}_2) \htimes \bm{X}_T + (\alpha \bm{\lambda}_1 + (1-\alpha) \bm{\lambda}_2 ) \htimes \bm{x}_0 \htimes \bm{S}_T \hdiv \bm{s}_0 \\
&= \alpha \bm{X}_T^{\bm{\lambda}_1,\bm{S}} + (1-\alpha) \bm{X}_T^{\bm{\lambda}_2,\bm{S}} \,. 
\end{align*}

Hence, by concavity of $\Lambda$ we have
\begin{align*}
\Lambda(\bm{X}_T^{\bm{\lambda}_\alpha,\bm{S}}) \geq \alpha \Lambda(\bm{X}_T^{\bm{\lambda}_1,\bm{S}}) + (1-\alpha) \Lambda(\bm{X}_T^{\bm{\lambda}_2,\bm{S}})\,.
\end{align*}

Since both $\Lambda(\bm{X}_T^{\bm{\lambda}_1,\bm{S}})$ and $\Lambda(\bm{X}_T^{\bm{\lambda}_2,\bm{S}})$ are contained in $\cA$, we know by convexity and monotonicity of $\cA$ that $\Lambda(\bm{X}_T^{\bm{\lambda}_\alpha,\bm{S}}) \in \cA$ and thus, $\bm{\lambda}_\alpha \in R_{\bm{S}}^{\mathrm{int}}(\bm{X})$ for all $\alpha \in [0,1]$.
\end{proof}
\end{proposition}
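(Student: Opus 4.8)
The plan is to show that $\RintX$ is stable under convex combinations by exploiting the observation that, for a \emph{fixed} position $\bm{X}$, the altered position $\Xlambda$ depends \emph{affinely} on the fraction vector $\bm{\lambda}$. First I would dispose of the trivial case: if $\RintX=\emptyset$ then convexity holds vacuously, so I may assume the set is non-empty, fix two elements $\bm{\lambda}_1,\bm{\lambda}_2\in\RintX$, and fix a mixing weight $\alpha\in[0,1]$. The goal is to show that $\bm{\lambda}_\alpha \coloneqq \alpha\bm{\lambda}_1+(1-\alpha)\bm{\lambda}_2\in[0,1]^d$ again lies in $\RintX$.

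The key step is to rewrite the defining combination. Expanding the Hadamard products gives
\begin{align*}
\Xlambda = \XT + \bm{\lambda} \htimes \big( \xo \htimes \ST \hdiv \so - \XT \big),
\end{align*}
which is manifestly affine in $\bm{\lambda}$. Consequently the mixed fraction satisfies the identity $\bm{X}_T^{\bm{\lambda}_\alpha,\bm{S}} = \alpha \bm{X}_T^{\bm{\lambda}_1,\bm{S}} + (1-\alpha)\bm{X}_T^{\bm{\lambda}_2,\bm{S}}$, which reduces the claim to a statement about $\Lambda$ and $\cA$ alone.

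From here I would chain the three hypotheses in order. Concavity of $\Lambda$ yields
\begin{align*}
\Lambda(\bm{X}_T^{\bm{\lambda}_\alpha,\bm{S}}) \geq \alpha\,\Lambda(\bm{X}_T^{\bm{\lambda}_1,\bm{S}}) + (1-\alpha)\,\Lambda(\bm{X}_T^{\bm{\lambda}_2,\bm{S}}).
\end{align*}
Because $\bm{\lambda}_1,\bm{\lambda}_2\in\RintX$, the two aggregated positions on the right lie in $\cA$, so convexity of $\cA$ places their convex combination in $\cA$. Finally, monotonicity of $\cA$ upgrades this to $\Lambda(\bm{X}_T^{\bm{\lambda}_\alpha,\bm{S}})\in\cA$, since the left-hand side dominates an element of $\cA$. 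Hence $\bm{\lambda}_\alpha\in\RintX$, which is exactly what convexity demands.

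I do not expect a genuine obstacle: the argument is a close cousin of the proof of \cref{claim:systemic_intrinsic_quasi-convex}, with the roles of the two arguments interchanged — there one fixes $\bm{\lambda}$ and varies $\bm{X}$, whereas here one fixes $\bm{X}$ and varies $\bm{\lambda}$. The only point that deserves care is confirming that $\bm{\lambda}\mapsto\Xlambda$ is affine (and not merely convex) in $\bm{\lambda}$; it is precisely this affineness that makes the single concavity inequality for $\Lambda$ go through, and it is worth checking that the Hadamard-product bookkeeping does not hide any nonlinearity.
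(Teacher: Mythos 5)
Your proof is correct and follows essentially the same route as the paper's: both establish the identity $\bm{X}_T^{\bm{\lambda}_\alpha,\bm{S}} = \alpha \bm{X}_T^{\bm{\lambda}_1,\bm{S}} + (1-\alpha)\bm{X}_T^{\bm{\lambda}_2,\bm{S}}$ (you via the explicit affine rewriting in $\bm{\lambda}$, the paper by direct expansion of the Hadamard products) and then chain concavity of $\Lambda$ with convexity and monotonicity of $\cA$. The difference is purely presentational.
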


Convexity of the acceptance set corresponds to the diversification principle.
However, since we work with random vectors, we must aggregate them accordingly, which is achieved by requiring that $\Lambda$ is concave.
Indeed, concavity is in line with the diversification principle, as the aggregation of a convex combination of two systems should not reduce the value when compared to the convex combination of two aggregated systems.
This way, convexity of $\cA$ is passed on to $\RintX$.
Furthermore, there is a wide range of numerical methods for the approximation and representation of convex sets. 
In \cref{sec:grid_search_explained}, we will see that convexity allows for a modification of the algorithm for the computation of $\RintX$ that makes the approximation faster for the same accuracy or more accurate when performing the same number of iterations.

\medskip

\begin{remark}
An important and useful property of monetary systemic risk measures is the upper set property stated in \cref{prop:properties_syst_risk} \ref{item_uppersets}.
It motivates the definition of EARs in \citep[Definition 3.3]{bib:FRW}, which can be interpreted as minimal capital requirements for the participants of a financial system, and the communication of which is easier compared to the whole systemic risk measure.
Furthermore, the upper set property lays the basis for the algorithm presented in \citep[Section~4]{bib:FRW}.
In contrast, intrinsic systemic risk measures do not, in general, exhibit this property, as one can see for example in \cref{fig:intr_rm_different_rhos}.
We will attempt to give an intuitive explanation for this.

The upper set property is essentially a consequence of the monotonicity of $\Lambda$ and $\cA$, as shown in \cref{appendix_proofs}.
This means, any participant of an acceptable system is free to add an arbitrary amount of capital through their eligible asset without influencing the acceptability of the system.
In particular, intra- and inter-dependencies of $\bm{X}_T$ and $\bm{S}_T$ are irrelevant.

In the intrinsic approach, however, no external capital is injected into the system. 
Instead, the system is translated element-wise to a system of eligible assets and, in general, the partial order on $L^p_d$ is not enough to compare the resulting positions.
In particular, the set $\RintX$ depends on the interplay of $\bm{X}$ and $\bm{S}$.
For example, assume the elements of $\bm{X}_T = (X_T^1 , X_T^2)^\intercal$ are negatively correlated and the eligible vector $\bm{S}_T \hdiv \bm{s}_0 = (r, r)^\intercal$ is constant. 
If one institution decreases its holding in its original position and increases its holding in the eligible asset, the correlation between the institutions increases.
This in turn can result in an unacceptable aggregate system.
So since the management action in the intrinsic approach does not rely on external capital injections, it is more sensitive to the overall dependency structure of the system and the eligible assets.

The definition of a concept similar to EARs is still possible, as will be discussed in \cref{sec:grid_search_explained}.
However, on a stand-alone basis, institutions are in general not allowed to increase their holdings in the eligible assets beyond the prescribed proportion $\bm{\lambda}$.

For convex $\RintX$ the lack of the upper set property is not a drawback for the computational approximation.
For conic acceptance sets and concave aggregation functions, an additional assumption is sufficient to apply the algorithm described in \cref{sec:grid_search_explained}, see \cref{prop:lines_to_1_contained}.
\end{remark}

\begin{proposition} \label{prop:lines_to_1_contained}
Let $\Lambda$ be concave and let $\cA$ be a cone.
Assume that the eligible asset satisfies $\Lambda(\bm{x}_0 \htimes \bm{S}_T \hdiv \bm{s}_0) \geq 0$.
If $\bm{\lambda} \in R_{\bm{S}}^{\mathrm{int}}(\bm{X})$, then for $\alpha \in [0,1]$ we have $(1-\alpha) \bm{\lambda} + \alpha \bm{1} \in R_{\bm{S}}^{\mathrm{int}}(\bm{X})$.

\begin{proof}
    Let $\alpha \in [0,1]$.
    Since $\bm{\lambda} \in R_{\bm{S}}^{\mathrm{int}}(\bm{X})$, the aggregated position $\Lambda(X^{\bm{\lambda},\bm{S}}_T)$ is contained in $\cA$, and since $\cA$ is a cone, also $(1-\alpha) \Lambda(X^{\bm{\lambda},\bm{S}}_T)$ is contained in $\cA$.
    By concavity of $\Lambda$, we arrive at
    \begin{align*}
        \Lambda \left((1-\alpha) \bm{X}^{\bm{\lambda},\bm{S}}_T + \alpha (\bm{x}_0 \htimes \bm{S}_T \hdiv \bm{s}_0) \right) 
        &\geq (1-\alpha) \Lambda(\bm{X}^{\bm{\lambda},\bm{S}}_T)  + \alpha \Lambda(\bm{x}_0 \htimes \bm{S}_T \hdiv \bm{s}_0) \\
        &\geq (1-\alpha) \Lambda(\bm{X}^{\bm{\lambda},\bm{S}}_T) \in \cA \,,
    \end{align*}
    and thus by monotonicity of $\cA$, the position $\Lambda \left((1-\alpha) \bm{X}^{\bm{\lambda},\bm{S}}_T + \alpha \bm{x}_0 \htimes \bm{S}_T \hdiv \bm{s}_0 \right)$ is acceptable.
    A short calculation shows that 
    \begin{align*}
        (1-\alpha) \bm{X}^{\bm{\lambda},\bm{S}}_T + \alpha (\bm{x}_0 \htimes \bm{S}_T \hdiv \bm{s}_0) = \bm{X}^{(1-\alpha)\bm{\lambda} + \alpha \bm{1}, \bm{S}}_T \,,
    \end{align*}
    proving the assertion.
\end{proof}
\end{proposition}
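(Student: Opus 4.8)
The plan is to verify directly that the point $(1-\alpha)\bm{\lambda} + \alpha \bm{1}$ lies in $R_{\bm{S}}^{\mathrm{int}}(\bm{X})$, i.e.\ that the aggregated position $\Lambda\big(\bm{X}_T^{(1-\alpha)\bm{\lambda}+\alpha\bm{1},\,\bm{S}}\big)$ is acceptable. The key structural observation, which I would establish first, is an algebraic identity expressing the intrinsic position at the interpolated coefficient as a convex combination of the intrinsic position at $\bm{\lambda}$ and the fully-invested position $\bm{x}_0 \htimes \bm{S}_T \hdiv \bm{s}_0$. Concretely, since $\bm{X}_T^{\bm{1},\bm{S}} = \bm{x}_0 \htimes \bm{S}_T \hdiv \bm{s}_0$, linearity of the map $\bm{\lambda} \mapsto \bm{X}_T^{\bm{\lambda},\bm{S}}$ (it is affine in $\bm{\lambda}$ componentwise) should yield
\begin{align*}
\bm{X}_T^{(1-\alpha)\bm{\lambda}+\alpha\bm{1},\,\bm{S}} = (1-\alpha)\bm{X}_T^{\bm{\lambda},\bm{S}} + \alpha\,(\bm{x}_0 \htimes \bm{S}_T \hdiv \bm{s}_0)\,.
\end{align*}
This is exactly the short calculation flagged at the end of the statement, and I would defer it to a one-line verification rather than grinding through the Hadamard products.

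With that identity in hand, the argument proceeds by combining the three hypotheses in sequence. First, since $\bm{\lambda} \in R_{\bm{S}}^{\mathrm{int}}(\bm{X})$, the position $\Lambda(\bm{X}_T^{\bm{\lambda},\bm{S}})$ lies in $\cA$, and conicity of $\cA$ (property \ref{propA:cone}) gives $(1-\alpha)\Lambda(\bm{X}_T^{\bm{\lambda},\bm{S}}) \in \cA$ for $\alpha \in [0,1)$; the boundary case $\alpha = 1$ is handled separately or absorbed via closedness, though here it corresponds to the endpoint $\bm{1}$, which is acceptable under the asset hypothesis. Next, concavity of $\Lambda$ applied to the convex combination from the identity yields
\begin{align*}
\Lambda\big(\bm{X}_T^{(1-\alpha)\bm{\lambda}+\alpha\bm{1},\,\bm{S}}\big) \geq (1-\alpha)\Lambda(\bm{X}_T^{\bm{\lambda},\bm{S}}) + \alpha\,\Lambda(\bm{x}_0 \htimes \bm{S}_T \hdiv \bm{s}_0)\,.
\end{align*}
Finally, the hypothesis $\Lambda(\bm{x}_0 \htimes \bm{S}_T \hdiv \bm{s}_0) \geq 0$ lets me drop the second term to obtain the lower bound $(1-\alpha)\Lambda(\bm{X}_T^{\bm{\lambda},\bm{S}})$, which I already showed belongs to $\cA$. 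Monotonicity of $\cA$ (property \ref{propA:monotone}) then promotes this lower bound to the conclusion that $\Lambda\big(\bm{X}_T^{(1-\alpha)\bm{\lambda}+\alpha\bm{1},\,\bm{S}}\big) \in \cA$.

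The subtle point, and the one I would be most careful about, is the precise role of each hypothesis and the direction in which each inequality is used: conicity is needed to scale the acceptable position by $(1-\alpha)$, the non-negativity of the aggregated eligible asset is needed to discard the $\alpha$-weighted term without losing acceptability, and monotonicity of $\cA$ is what converts a pointwise lower bound into membership. None of these steps is computationally hard, so there is no real obstacle; the work is conceptual bookkeeping to ensure the chain $\text{conicity} \to \text{concavity} \to \text{non-negativity} \to \text{monotonicity}$ is assembled in the correct order. I would also note that $(1-\alpha)\bm{\lambda}+\alpha\bm{1} \in [0,1]^d$ automatically, since it is a convex combination of two points of $[0,1]^d$, so the candidate is a legitimate element of the domain.
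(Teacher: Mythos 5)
Your proposal is correct and follows essentially the same route as the paper's proof: the affine identity $\bm{X}_T^{(1-\alpha)\bm{\lambda}+\alpha\bm{1},\,\bm{S}} = (1-\alpha)\bm{X}_T^{\bm{\lambda},\bm{S}} + \alpha\,(\bm{x}_0 \htimes \bm{S}_T \hdiv \bm{s}_0)$, then conicity, concavity of $\Lambda$, non-negativity of the aggregated eligible asset, and monotonicity of $\cA$, assembled in exactly that order. Your remark on the boundary case $\alpha = 1$ (where conicity with factor $1-\alpha = 0$ does not directly apply, since property \ref{propA:cone} requires $c>0$) is in fact a point the paper's own proof glosses over, so flagging it is a small improvement rather than a deviation.
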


\medskip

The above result can be interpreted as a set-valued counterpart of 
\begin{align*}
\left\{X^{\lambda, S}_T \mid \lambda\in[\rho^\mathrm{int}_{\cA, S}(X),1]\right\}\subseteq\cA    
\end{align*}
for univariate intrinsic risk measures on conic acceptance sets mentioned in \citep[p.~175]{bib:FSint}. 
However, while in the univariate case this means that any proportion $\lambda\geq \rho^\mathrm{int}_{\cA, S}(X)$ would lead to an acceptable position $X^{\lambda, S}_T$, in the multivariate case it is important that starting from $\bm{\lambda}\in R_{\bm{S}}^{\mathrm{int}}(\bm{X})$ all entries of $\bm{\lambda}$ would have to be increased proportionally to ensure that $\bm{\lambda_2}\geq\bm{\lambda}$ is also included in $R_{\bm{S}}^{\mathrm{int}}(\bm{X})$, see also Figure \ref{fig:grid_search_demo}.

\begin{remark} \label{rem:lines_to_1_contained_convex}
Notice that a statement similar to the one in \cref{prop:lines_to_1_contained} is true if $\Lambda$ is concave, $\cA$ is convex and we demand $\Lambda(\bm{x}_0 \htimes \bm{S}_T \hdiv \bm{s}_0) \in \cA$.
In this case, we know by \cref{claim:riskmeasuresets_are_convex} that $\RintX$ is convex and hence, $(1-\alpha) \bm{\lambda} + \alpha \bm{1} \in R_{\bm{S}}^{\mathrm{int}}(\bm{X})$ for all $\bm{\lambda} \in \RintX$.
\end{remark}

This observation and \cref{prop:lines_to_1_contained} are used in the algorithms described in \cref{sec:grid_search_explained} to approximate values of intrinsic systemic risk measures.

\subsection{Computation of intrinsic measures of systemic risk} \label{sec:grid_search_explained}

As a set-valued measure, the intrinsic systemic risk measure is more difficult to calculate compared to scalar risk measures.
In general, one has to rely on numerical methods to approximate these sets.
A natural approximation would consist of a collection of points $\bm{\lambda} \in \RintX$ which lie close to the boundary $\partial \RintX$.
Assuming a concave aggregation function, we know by \cref{prop:lines_to_1_contained} for conic acceptance sets and \cref{rem:lines_to_1_contained_convex} for convex acceptance sets that all line segments which connect points 
$\bm{\lambda} \in \RintX$ with $\bm{1}$ are contained in $\RintX$.

\medskip

In the following, we will use these results to construct a simple bisection method which approximates the boundary of the intrinsic risk measure with a prespecified accuracy. 
In this section, we assume that either the assumptions in \cref{prop:lines_to_1_contained} or \cref{rem:lines_to_1_contained_convex} hold.
The algorithm is illustrated for a network of two and three participants in \cref{fig:grid_search_demo}. 
Since the intrinsic measure maps into the power set of $[0,1]^d$, we can a priori restrict our search to $[0,1]^d$.

\medskip

We consider the $d$ faces of the cube $[0,1]^d$ which contain the point $\bm{0} \in \RR^d$ and construct a grid on each of these faces. 
The resolution of this grid influences the spacing of the points we approximate on the boundary of $\RintX$.
\begin{enumerate}
    \item Let $\bm{\lambda}_0$ be a point on this grid.
    If $\bm{\lambda}_0 \in \RintX$, add it to the collection of approximation points and proceed with a new grid point $\bm{\lambda}_0$, otherwise continue with Step 2.
    
    \item Do a bisection search along the line connecting $\bm{\lambda}_0$ and $\bm{1}$. 
    To this end, define $\bm{\lambda}_{a_0} = \bm{\lambda}_0 \notin \RintX$ and $\bm{\lambda}_{b_0} = \bm{1} \in \RintX$ as the initial end points.
    
    \item Iterate over $k \geq 1$ and construct a sequence that tends to a point $\bm{\lambda}$ on the boundary of $\RintX$.
    In each iteration, define $\bm{\lambda}_k = \frac{1}{2}(\bm{\lambda}_{a_{k-1}} + \bm{\lambda}_{b_{k-1}})$ and check whether the aggregated position corresponding to $\bm{\lambda}_k$ is acceptable.
    
    \item If $\Lambda(\bm{X}_T^{\bm{\lambda}_k , \bm{S}}) \in \cA$, set $\bm{\lambda}_{a_k} = \bm{\lambda}_{a_{k-1}}$ and $\bm{\lambda}_{b_k} = \bm{\lambda}_k$, otherwise $\bm{\lambda}_{a_k} = \bm{\lambda}_k$ and $\bm{\lambda}_{b_k} = \bm{\lambda}_{b_{k-1}}$.
    
    \item Stop this procedure at $n \in \NN$ at which the distance $\Vert \bm{\lambda}_{b_{n}} - \bm{\lambda}_{a_{n}} \Vert$ is smaller than some desired threshold $\epsilon>0$ and take $\hat{\bm{\lambda}} = \bm{\lambda}_{b_n}$ as the approximation of $\bm{\lambda}$. 
    This also covers the rare case that for some $k$, $\bm{\lambda}_k = \bm{\lambda}$. Add $\hat{\bm{\lambda}}$ to the collection of approximations and repeat the procedure with a new grid point $\bm{\lambda}_0$.
\end{enumerate}

By definition, $\hat{\bm{\lambda}} = \bm{\lambda}_{b_n}$ lies in $\RintX$ and is arbitrarily close to the boundary, as $\Vert \hat{\bm{\lambda}} - \bm{\lambda} \Vert \leq \Vert \bm{\lambda}_{b_n} - \bm{\lambda}_{a_n} \Vert = 2^{-n} \Vert 1-\bm{\lambda}_0 \Vert \leq 2^{-n}\sqrt{d}$. 
Since for any point $\bm{\lambda}_0$ on the grid we have $1-\bm{\lambda}_0 \geq 0$, we also know that $\bm{\lambda}_{a_n} \leq \bm{\lambda} \leq \bm{\lambda}_{b_n}$.

We repeat this procedure for all points $\bm{\lambda}_0$ in the grid on the faces of the cube containing $\bm{0}$.
This establishes an approximation of the boundary of $\RintX$ as a collection of points $\{ \hat{\bm{\lambda}}_k \}_{k=1}^N \subset \RintX$, where $N$ is the number of grid points.
Notice that this algorithm approximates only the boundary marked in green in \cref{fig:grid_search_demo}.
The rest of the boundary is approximated by all lines connecting the algorithmically found points on the faces of $[0,1]^d$ with $\bm{1}$.

\begin{figure}[ht]
    \centering
    \includegraphics[scale=0.65]{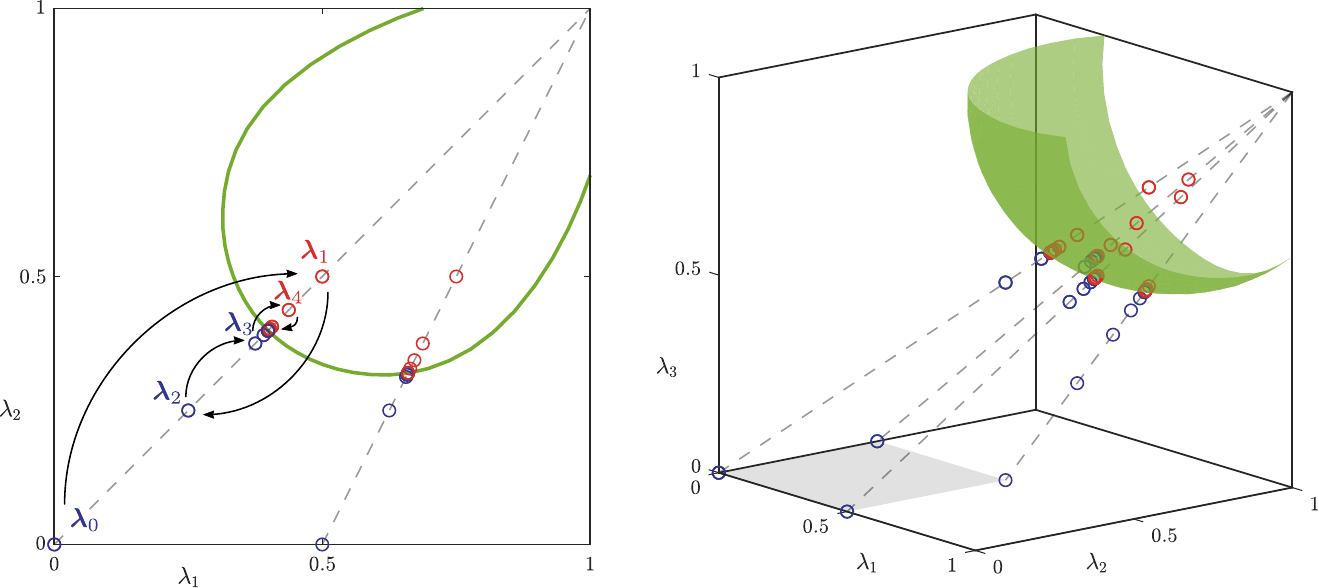}
    \caption{Illustration of grid search algorithm on $[0,1]^2$ and $[0,1]^3$.}
    \label{fig:grid_search_demo}
\end{figure}

\begin{remark}
This collection of points constitutes an inner approximation of the set $\RintX$.
Notice that an outer approximation of the set as part of a $\bm{v}$-approximation as described in \citep[Definition 4.1]{bib:FRW} is in general not possible, since the values of $\Rint$ are in general not upper sets and therefore $\RintX \not\subset \RintX - \bm{v}$, for $\bm{v} \in \RR^d_{++}$.
However, this is not a drawback, since we are only interested in elements of $\RintX$.
\end{remark}

\medskip

We can easily communicate this approximation of the set $\RintX$ for small dimensions.
In particular, since all line segments connecting points of the approximation and $\bm{1}$ are contained in $\RintX$, we get a `discrete cover' of $\RintX$ by lines.
For a fine grid, this may be precise enough for practical purposes.
However, interpolation between these lines is, in general, not possible if the acceptance set is not convex.

If, however, the acceptance set is convex, then we know by \cref{claim:riskmeasuresets_are_convex} that $R_{\bm{S}}^{\mathrm{int}}(\bm{X})$ is convex.
In this case, we can define a stronger approximation as the convex hull of $\{\hat{\bm{\lambda}}_k \}_{k=1}^N$ and the vector $\bm{1}$,
\begin{align*}
    \RintXhat \coloneqq \conv \{ \{ \hat{\bm{\lambda}}_k \}_{k=1}^N \cup \{\bm{1} \}\} \subset \RintX \,.
\end{align*}
In particular, any interpolation between the line segments is also contained in $\RintX$. 
This allows for the tradeoff between accuracy at the boundary of $\RintX$ and substantially faster computation by coarsening the grid on the faces, while still covering the majority of $\RintX$.

\begin{remark}
This algorithm can be applied to high-dimensional systems at the expense of considerably longer runtime and memory usage.
\citet[Remark 4.3]{bib:FRW} suggest to reduce the dimension of the problem by dividing the set of institutions into groups with equal capital requirements for the computation of their measure of systemic risk. 
This is also possible in the intrinsic framework.
In analogy to \citep[Example 2.1 (iv)]{bib:FRW}, for $k < d$ groups we can restrict the risk measurement to vectors of the form
$\bm{\lambda} = (\lambda_1, \ldots, \lambda_1, \lambda_2,\ldots,\lambda_2,\ldots,\lambda_k, \ldots,\lambda_k)^\intercal \in [0,1]^d$.
However, since the values of intrinsic systemic risk measures are not upper sets, players cannot, in general, deviate from this position.
This means a system is not guaranteed to remain acceptable if an institution in group $j \in \{1,\ldots,k\}$ increases its position in the eligible asset, that is, chooses to sell a greater fraction of its position than $\lambda_j$.
\end{remark}

\medskip

\begin{remark}
In analogy to \citep[Definition 3.3]{bib:FRW}, we can define a notion similar to EARs for intrinsic systemic risk measures.
For a convex, closed risk measurement $\RintX \notin \{\emptyset , [0,1]^d \}$ we define the set of minimal points as
\begin{align*}
    \mathrm{Min}\,\RintX = \{ \bm{\lambda} \in [0,1]^d \mid (\bm{\lambda} - [0,1]^d) \cap \RintX = \{ \bm{\lambda} \}  \} \,.
\end{align*}
However, since $\RintX$ is not an upper set in general, it is not true that for $\bm{\lambda}^* \in \mathrm{Min}\,\RintX$ the set $(\bm{\lambda}^* + [0,1]^d) \cap [0,1]^d$ is a maximal subset of $\RintX$.
In fact, it is not necessarily a subset at all.
In particular, there are points $\bm{\lambda}^* \in \mathrm{Min}\,\RintX$ such that $\bm{\lambda}^* + \epsilon \bm{e}_k \notin \RintX$ for any $\epsilon > 0$ and some standard unit vector $\bm{e}_k \in \RR^d$.
This means agents cannot deviate from allocations in $\mathrm{Min}\,\RintX$.
To tackle this problem and allow small perturbations without loosing acceptability, we could further restrict the set of minimal points for $\epsilon > 0$ to
\begin{align*}
    \mathrm{Min}_\epsilon\,\RintX = \{ \bm{\lambda} \in \mathrm{Min}\,\RintX \mid \bm{\lambda} + \epsilon [0,1]^d \in \RintX \} \,.
\end{align*}
However, a priori this set is not guaranteed to be non-empty.
Take for example a pointed, closed, convex cone, $C\subsetneq\RR_{++}^d$.
Then for any $\bm{x}\in (0,1)^d$ the set $(C + \bm{x}) \cap [0,1]^d$ has only one minimal point, $\bm{x}$, and $\bm{x} + \epsilon \bm{e}_k \notin (C + \bm{x}) \cap [0,1]^d$, for any $\epsilon >0$ and standard unit vector $\bm{e}_k$.
However, it remains to be investigated whether convex intrinsic systemic risk measurements can take this form.
\end{remark}

\bigskip

From a practical perspective, the objective of the regulator might be to make a network acceptable with as little alteration to existing positions as possible.
What exactly this means depends on the choice of the objective function.
The most straightforward choice would be to minimise the overall percentage change of all positions in the network.
In that case, we are interested in all $\bm{\lambda} \in \RintX$ with minimal sum over their components, or equivalently on $[0,1]^d$, with minimal $1$-norm,
\begin{align} \label{eq:argmin_min_sum}
    \argmin_{\bm{\lambda} \in \RintX}  \bm{\lambda}^\intercal \bm{1}  =  \argmin_{\bm{\lambda} \in \RintX}  \vert \bm{\lambda} \vert_1 \,.
\end{align}
Alternatively, one could minimise the total nominal change in the sense of the value $\bm{x}_0 \htimes \bm{\lambda}$. 
The optimisation for this, and in fact any weighted sum with non-negative weights $\bm{w}$, can be written in the form of \cref{eq:argmin_min_sum} by minimising over the set $\bm{w} \htimes \RintX = \{ \bm{w} \htimes \bm{\lambda} \mid \bm{\lambda} \in \RintX  \}$.
So in the following, we will concentrate on the case in \eqref{eq:argmin_min_sum}.

To simplify the problem, we define for $k\geq0$ the plane $E_k = \{ \ell \in [0,1]^d \mid \ell^\intercal \bm{1} = k \}$, on which each element has the same $1$-norm.
For increasing $k \in [0,d]$ these planes contain elements with increasing $1$-norms.
In particular, the first nonempty intersection $E_k \cap \RintX$ for increasing $k$ contains all points with minimal $1$-norm in $\RintX$,
\begin{align*}
\argmin_{\bm{\lambda} \in \RintX}  \bm{\lambda}^\intercal \bm{1} = 
E_{k_{\min}}\cap\RintX \,,
\end{align*}
where $k_{\min}=\min\left\{k \in [0,d] \mid E_k\cap \RintX\neq\emptyset\right\}$.
So if we are interested in these minimal points, we do not need to approximate the whole boundary of the risk measurement. 
Instead, we can take advantage of this observation and adapt the algorithm to find only the minimal points.
This also reduces the computational load.

\medskip

To implement this procedure, we define the orthogonal complement of $\bm{1} \in \RR^d$, $\bm{1}^{\perp} = \{ \ell \in \RR^d \mid \ell^\intercal \bm{1} = 0 \}$, and we write $E_k = (\frac{k}{d}\bm{1} + \bm{1}^{\perp}) \cap [0,1]^d$.
The idea is to do a modified bisection search, where in each iteration we check whether the intersection of $E_k$ and $\RintX$ is empty or not.
The algorithm is described below and its modification using the method described in \cref{rem:reduce_planes} is illustrated in \cref{fig:grid_search_with_plane}.
\begin{enumerate}
    \item Generate a grid on the plane $\bm{1}^{\perp}$, such that it covers the whole cube $[0,1]^d$ when translated along the vector $\bm{1}$.
    
    \item Define $k_{a_0} = 0$ and $k_{b_0} = d$ as the initial end points of the search.

    \item In each iteration, define $k_\ell = \frac{1}{2}( k_{a_{\ell-1}} + k_{b_{\ell-1}} )$ and translate the grid from $\bm{1}^\perp$ to $E_{k_\ell}$. 
    Calculate which grid points lie in $E_{k_\ell} \cap \RintX$.
    If the intersection contains no grid points, set $k_{a_{\ell}} = k_\ell$ and $k_{b_{\ell}} = k_{b_{\ell-1}}$, otherwise set $k_{a_{\ell}} = k_{a_{\ell-1}}$ and $k_{b_{\ell}} = k_{\ell}$.
    
    \item Repeat Step 3 until $k_{b_\ell} - k_{a_\ell} < \delta$, for some prespecified threshold $\delta > 0$.
\end{enumerate}

This leaves us with a sequence of planes that converge to $E_{k_{\min}}$ and we can easily check which grid points lie in $E_{k_{\min}}\cap\RintX$.

\medskip

\begin{remark} \label{rem:reduce_planes}
    If the acceptance set is convex, we can reduce the computational time even further with the help of \cref{lemma:convex_set_and_planes}.
    This method is briefly outlined below.
    In addition to the steps described above, we also keep track of the current acceptable grid points in $E_{k_\ell} \cap \RintX$. 
    Then, whenever we consider grid points of a non-empty intersection $E_{k_\ell - \epsilon} \cap \RintX$ for some $\epsilon > 0$, we compare them to the current acceptable grid points.
    If all the grid points in $E_{k_\ell - \epsilon} \cap \RintX$ are contained in the set of grid points in $(E_{k_\ell} \cap \RintX) - \frac{\epsilon}{d} \bm{1}$, we can, by \cref{lemma:convex_set_and_planes} , restrict our search to only the former grid points.
    
    \medskip
    
    Alternatively, we can increase the accuracy of the algorithm by refining the grid.
    Whenever we use \cref{lemma:convex_set_and_planes} and restrict our search from a grid on some $E_k$ to a smaller grid on $E_{k-\epsilon}$, we can decrease the step size between grid points such that the number of grid points per iteration stays the same.
\end{remark}

\begin{figure}
    \centering
    \includegraphics[scale=0.62]{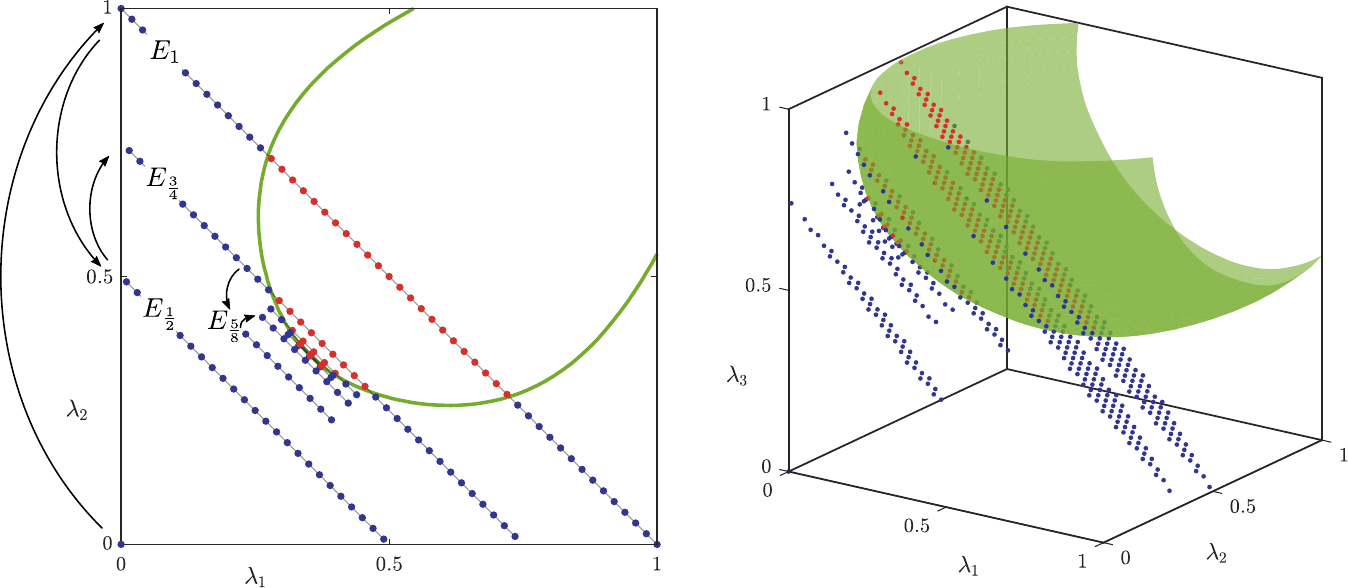}
    \caption{Visualisation of grid search algorithm to find minimal points.}
    \label{fig:grid_search_with_plane}
\end{figure}

\begin{lemma} \label{lemma:convex_set_and_planes}
Let $A \subset \RR^d$ be a closed, convex set.
Let $A_k = \{ \bm{x}\in A \mid \bm{x}^\intercal \bm{1} = k \}$.
If there exists $\epsilon > 0$ such that $A_{k-\epsilon} \subset A_k - \frac{\epsilon}{d}\bm{1}$, then for all $\delta > \epsilon$ also $A_{k-\delta} \subset A_{k-\epsilon} - \frac{\delta-\epsilon}{d} \bm{1}$.
\end{lemma}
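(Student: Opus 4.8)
The plan is to re-centre every slice onto the common hyperplane $\bm{1}^\perp=\{\bm{z}\in\RR^d\mid \bm{z}^\intercal\bm{1}=0\}$ and thereby reduce the statement to a monotone-nesting property of a set-valued concave map. Write $\bm{u}=\tfrac1d\bm{1}$, so that $\bm{u}^\intercal\bm{1}=1$ and subtracting $s\bm{u}$ lowers the coordinate sum by exactly $s$. For each level $s$ define the re-centred slice $C(s)=A_s-s\bm{u}=(A-s\bm{u})\cap\bm{1}^\perp$, a closed convex subset of $\bm{1}^\perp$ (closed because $A$ is closed and the hyperplane $\{\bm{x}^\intercal\bm{1}=s\}$ is closed). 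Subtracting the appropriate multiple of $\bm{u}$ from both sides shows that the hypothesis $A_{k-\epsilon}\subseteq A_k-\tfrac{\epsilon}{d}\bm{1}$ is exactly $C(k-\epsilon)\subseteq C(k)$, while the desired conclusion $A_{k-\delta}\subseteq A_{k-\epsilon}-\tfrac{\delta-\epsilon}{d}\bm{1}$ is exactly $C(k-\delta)\subseteq C(k-\epsilon)$. So it suffices to prove this last inclusion.

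Next I would record that $C$ is \emph{concave} as a set-valued map: for $s=ts_1+(1-t)s_2$ with $t\in[0,1]$ one has $t\,C(s_1)+(1-t)\,C(s_2)\subseteq C(s)$. This is immediate from convexity of $A$: if $\bm{c}_i\in C(s_i)$ then $\bm{c}_i+s_i\bm{u}\in A_{s_i}\subseteq A$, hence $t(\bm{c}_1+s_1\bm{u})+(1-t)(\bm{c}_2+s_2\bm{u})\in A$; since this point equals $\big(t\bm{c}_1+(1-t)\bm{c}_2\big)+s\bm{u}$ and $t\bm{c}_1+(1-t)\bm{c}_2\in\bm{1}^\perp$, it lies in $C(s)$.

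The main obstacle is that concavity only yields a Minkowski-combination inclusion, whereas I need the pure inclusion $C(k-\delta)\subseteq C(k-\epsilon)$; the trick is to feed the hypothesis $C(k-\epsilon)\subseteq C(k)$ back into the concavity inequality and iterate. I would choose $t=\tfrac{\delta-\epsilon}{\delta}\in(0,1)$, so that $k-\epsilon=t\,k+(1-t)(k-\delta)$. Fix any $\bm{c}\in C(k-\delta)$ (if $C(k-\delta)=\emptyset$ there is nothing to prove) and any starting point $\bm{c}_0\in C(k)$; here $C(k)\neq\emptyset$ because the hypothesis together with $C(k-\epsilon)\neq\emptyset$ forces it, which is precisely the nonempty regime relevant to \cref{rem:reduce_planes}. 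Define $\bm{c}_{n+1}=t\,\bm{c}_n+(1-t)\,\bm{c}$, which solves to the geometric expression $\bm{c}_n=t^{\,n}\bm{c}_0+(1-t^{\,n})\bm{c}$. By concavity and the hypothesis, $\bm{c}_1=t\bm{c}_0+(1-t)\bm{c}\in t\,C(k)+(1-t)\,C(k-\delta)\subseteq C(k-\epsilon)\subseteq C(k)$, and inductively $\bm{c}_n\in C(k-\epsilon)\subseteq C(k)$ for every $n\ge1$. Since $t\in(0,1)$ we have $t^{\,n}\to0$, hence $\bm{c}_n\to\bm{c}$; as $C(k-\epsilon)$ is closed, $\bm{c}\in C(k-\epsilon)$. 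This establishes $C(k-\delta)\subseteq C(k-\epsilon)$, and translating back by $+\,(k-\delta)\bm{u}$ recovers $A_{k-\delta}\subseteq A_{k-\epsilon}-\tfrac{\delta-\epsilon}{d}\bm{1}$.

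I would flag two technical points. First, closedness of $A$ is used only at the limiting step, so it is genuinely needed. Second, the nonemptiness of the slices is essential: if $A_{k-\epsilon}$ is empty while $A$ still has points below level $k-\epsilon$, the stated inclusion can fail, so the lemma is to be read in the nonempty regime in which it is actually applied, where $A_{k-\epsilon}\neq\emptyset$ (and hence, by the hypothesis, $A_k\neq\emptyset$).
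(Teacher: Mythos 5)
Your proof is correct, but it takes a genuinely different route from the paper's. The paper argues by contradiction: assuming some $\bm{x}_\delta \in A_{k-\delta}$ violates the conclusion, it takes the metric projection $\hat{\bm{x}} = \argmin_{\bm{x}\in A_{k-\epsilon}}\Vert \bm{x}-\bm{x}_\delta\Vert$, writes $\bm{x}_\delta = \hat{\bm{x}} - \frac{\delta-\epsilon}{d}\bm{1} + \bm{y}$ with $\bm{y}^\intercal\bm{1}=0$, $\bm{y}\neq\bm{0}$, uses the strict decrease of $\beta \mapsto \Vert\bm{x}_\delta - (\hat{\bm{x}}+\beta\bm{y})\Vert$ on $[0,1]$ to conclude that $\hat{\bm{x}}+\beta\bm{y}\notin A$ for $\beta\in(0,1]$, and then obtains a contradiction with convexity because $\hat{\bm{x}}+\frac{\epsilon}{\delta}\bm{y}$ is a convex combination of $\bm{x}_\delta\in A$ and $\hat{\bm{x}}+\frac{\epsilon}{d}\bm{1}\in A_k$. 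You instead re-centre the slices onto $\bm{1}^\perp$, observe that $s \mapsto C(s)$ is a concave set-valued map, and feed the hypothesis $C(k-\epsilon)\subseteq C(k)$ into an iteration whose geometric convergence, combined with closedness of $C(k-\epsilon)$, gives the inclusion. What each buys: the paper's argument is short, self-contained and visibly geometric, but it leans on the Euclidean norm and the existence of a nearest point; yours isolates the structural content (translation along $\bm{1}$, concavity of the slice map, closedness), uses no metric projection, and would carry over verbatim to a general normed space. Your nonemptiness caveat is well taken and in fact applies equally to the paper: if $A_{k-\epsilon}=\emptyset$ while $A_{k-\delta}\neq\emptyset$, the hypothesis is vacuously true and the stated inclusion fails, and the paper's own proof silently assumes $A_{k-\epsilon}\neq\emptyset$ when it takes the $\argmin$ over that set. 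Both proofs, and the lemma itself, should be read in the nonempty regime in which the algorithm applies it.
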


The proof of \cref{lemma:convex_set_and_planes} is given in \cref{appendix_proofs}.


\section{Dual representations} \label{sec:dual_representation}

It is a classical result, which can be found for example in \citep[Section 4.2]{bib:FS}, that a convex proper lower semi-continuous risk measure $\rho_\cA: L^\infty \to \RR$ with $\bm{S}=(1,\bm{1})$ can be written in the form 
\begin{align*}
\rho_{\cA}(X)=\sup\limits_{\QQ\in\cM(\PP)}\left\{\EE^{\QQ}\left[-X\right]-\alpha(\QQ)\right\}
\end{align*}
where $\cM(\PP)$ denotes the set of all probability measures that are absolutely continuous with respect to $\PP$ and $\alpha$ denotes the minimal penalty function defined by 
\begin{equation}\label{eq:penalty}
    \alpha(\QQ)=\sup\limits_{X\in\cA}\EE^{\QQ}\left[-X\right].
\end{equation}
This can be seen as considering all possible probabilistic models, with their plausibility and closeness to $\PP$ being conveyed in the penalty function $\alpha$. 
The value of the risk measure then corresponds to the worst case expectation over all possible models, penalised by $\alpha$. 

\citet[Section 3.4]{bib:FSint} derive a similar result for scalar intrinsic risk measures, whereas \citet{bib:AR} provide the dual representations for monetary systemic risk measures with constant eligible assets 
under appropriate assumptions on the underlying acceptance set and aggregation function.

\medskip

In this section, we derive the dual representation for the intrinsic systemic risk measure. 
We denote by $\rho_\cA$ the scalar risk measure associated with the acceptance set $\cA$ and a constant eligible assets $S=(1, 1) \in \RR_+ \times L^\infty_+$.
In the following, we consider only the subspace $(L^\infty_d)_+$ and we assume that the aggregation function $\Lambda$ is concave.
Moreover, we assume that $\cA$ is convex and weak$^*$ closed.
Note that if $\cA$ is convex, we know by \cref{prop:accset_rm_correspondence} that $\rho_{\cA}$ is convex.
Furthermore, weak$^*$ closedness of $\cA$ is equivalent to $\rho_{\cA}$ being weak$^*$ lower semi-continuous as well as to $\rho_{\cA}$ satisfying the Fatou property, see for example \citep[Theorem 4.31]{bib:FS}.

Let $g: \RR^d \to \RR$ be the Legendre-Fenchel conjugate of the convex function $f: \RR^d \to \RR$ defined by $f(\bm{x})=-\Lambda(-\bm{x})$, that is, 
\begin{align*}
    g(\bm{z})=\sup\limits_{\bm{x}\in\RR^d}\left(\Lambda(\bm{x})-\bm{z}^\intercal \bm{x}\right) \,.
\end{align*}
Let $\cM_d(\PP)$ be the set of all vector probability measures $\QQ=(\QQ_1,\ldots,\QQ_d)^\intercal$ whose components $\QQ_k$ are in $\cM(\PP)$, $k \in \{1,\ldots,d\}$, and let $\alpha$ be the penalty function defined in \cref{eq:penalty}. 
We recall the definition of a systemic penalty function introduced in \citep[Definition 3.1]{bib:AR}.

\begin{definition}\label{def:systemicPenalty}
The function $\alpha^\mathrm{sys}:\cM_d(\PP)\times(\RR^d_+\setminus\left\{0\right\})\rightarrow\RR\cup\left\{+\infty\right\}$ defined by 
\begin{align*}
    \alpha^\mathrm{sys}(\QQ,w)=\inf\limits_{\substack{\bS\in\cM(\PP)\colon\\ \forall k \colon w_k\QQ_k\ll\bS}}\left\{\EE^{\bS}\left[g\left(\bm{w}\htimes\frac{\dd\QQ}{\dd\bS}\right)\right]+\alpha(\bS)\right\}
\end{align*}
for $\QQ\in\cM_d(\PP),\: \bm{w}\in\RR^d_+\setminus\left\{0\right\}$ is called the systemic penalty function. 
\end{definition}

\medskip

Now we can formulate the dual representation of intrinsic systemic risk measures.
\begin{proposition}\label{claim:dualRepresentations}
Let $\bm{S}\in\RR^d_{++}\times(L^\infty_d)_+$ be an eligible asset, let $\Lambda$ be a concave aggregation function and let $\cA$ be a weak$^*$ closed, convex acceptance set. 
Assume that $\rho_\cA(0)\in\Lambda(\RR^d)$. 
Then the intrinsic measure of systemic risk $R_{\bm{S}}^\mathrm{int} \colon \RR^d_{++} \times (L^\infty_d)_+ \rightarrow \mathcal{P}([0,1]^d)$ defined in \cref{eq:set_int} admits the following dual representation,
\begin{align*}
    R_{\bm{S}}^\mathrm{int}(\bm{X})=\bigcap\limits_{\substack{\QQ\in\cM_d(\PP),\\ \bm{w}\in \RR_+^d\setminus \left\{0\right\}}}\Bigg\{\bm{\lambda} \in [0,1]^d \; \bigg| \; \begin{split}
        \bm{\lambda}^\intercal \big( \bm{w}\htimes\EE^{\QQ}\left[\bm{x}_0\htimes \bm{S}_T\hdiv \bm{s}_0-\bm{X}_T\right] \big) \geq \ldots \\ \ldots \bm{w}^\intercal\EE^{\QQ}\left[-\bm{X}_T\right]-\alpha^\mathrm{sys}(\QQ,\bm{w}) \end{split}
    \Bigg\} \,.
\end{align*}
\end{proposition}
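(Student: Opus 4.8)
The plan is to reduce membership in $\RintX$ to a scalar acceptability condition for the composed functional $\rho_\cA \circ \Lambda$, and then to invoke the dual representation of monetary systemic risk measures. First I would use the correspondence of \cref{prop:accset_rm_correspondence}: since $\cA$ is weak$^*$ closed we have $\cA = \cA_{\rho_\cA}$, and hence $\Lambda(\Xlambda) \in \cA$ if and only if $\rho_\cA(\Lambda(\Xlambda)) \leq 0$. Consequently
\[
\bm{\lambda} \in \RintX \quad\Longleftrightarrow\quad \rho_\cA\big(\Lambda(\Xlambda)\big) \leq 0 \,.
\]

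The heart of the argument is the dual representation of the scalar monetary systemic risk measure $\bm{Y}_T \mapsto \rho_\cA(\Lambda(\bm{Y}_T))$. Under the standing assumptions---$\Lambda$ concave, $\cA$ convex and weak$^*$ closed (so that $\rho_\cA$ is convex and weak$^*$ lower semi-continuous), and $\rho_\cA(0) \in \Lambda(\RR^d)$ ensuring properness---this functional is convex, monotone and weak$^*$ lower semi-continuous on $(L^\infty_d)_+$, so that Fenchel--Moreau duality applies. Its conjugate is exactly the systemic penalty function $\alpha^{\mathrm{sys}}$ of \cref{def:systemicPenalty}, which arises as the infimal convolution of the penalty $\alpha = \rho_\cA^{*}$ of the scalar risk measure with the conjugate $g = \Lambda^{*}$ of the aggregation function. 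This is the representation established in \citep{bib:AR}, which I would cite in the form
\[
\rho_\cA\big(\Lambda(\bm{Y}_T)\big) = \sup_{\substack{\QQ \in \cM_d(\PP),\\ \bm{w} \in \RR^d_+ \setminus \{0\}}} \Big\{ \bm{w}^\intercal \EE^{\QQ}[-\bm{Y}_T] - \alpha^{\mathrm{sys}}(\QQ, \bm{w}) \Big\} \,.
\]

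With this representation at hand the remainder is algebraic. Applying it to $\bm{Y}_T = \Xlambda$, the condition $\rho_\cA(\Lambda(\Xlambda)) \leq 0$ is equivalent to every term of the supremum being non-positive, which turns the supremum into an intersection over all $\QQ$ and $\bm{w}$. It then remains to expand $\EE^{\QQ}[-\Xlambda]$. Using the definition $\Xlambda = (1-\bm{\lambda}) \htimes \XT + \bm{\lambda} \htimes \xo \htimes \ST \hdiv \so$ together with linearity of the componentwise expectation, a direct computation gives
\[
\bm{w}^\intercal \EE^{\QQ}[-\Xlambda] = \bm{w}^\intercal \EE^{\QQ}[-\XT] - \bm{\lambda}^\intercal \big( \bm{w} \htimes \EE^{\QQ}[\xo \htimes \ST \hdiv \so - \XT] \big) \,.
\]
Substituting into $\bm{w}^\intercal \EE^{\QQ}[-\Xlambda] - \alpha^{\mathrm{sys}}(\QQ, \bm{w}) \leq 0$ and rearranging produces precisely the inequality defining the right-hand side of the claim, so the two descriptions of $\RintX$ coincide.

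The main obstacle is the second step: establishing and correctly invoking the dual representation of $\rho_\cA \circ \Lambda$ with penalty $\alpha^{\mathrm{sys}}$. The delicate points are to verify that the hypotheses of the result from \citep{bib:AR} hold on the restricted domain $(L^\infty_d)_+$---in particular properness, which is where the assumption $\rho_\cA(0) \in \Lambda(\RR^d)$ enters by furnishing a constant vector that is exactly acceptable---and to match the infimal-convolution structure of $(\rho_\cA \circ \Lambda)^{*}$ with \cref{def:systemicPenalty}. Once this representation is secured, the reduction through $\rho_\cA$ and the expansion of the expectation are entirely routine.
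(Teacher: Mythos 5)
Your proposal is correct and follows essentially the same route as the paper's own proof: reduce membership $\bm{\lambda}\in\RintX$ to $\rho_\cA(\Lambda(\Xlambda))\leq 0$ via \cref{prop:accset_rm_correspondence}, invoke the dual representation of $\rho_\cA\circ\Lambda$ with the systemic penalty function $\alpha^{\mathrm{sys}}$ from \citep[Proposition 3.4]{bib:AR}, note that non-positivity of the supremum is equivalent to non-positivity of every term (turning the supremum into an intersection), and expand $\EE^{\QQ}[-\Xlambda]$ by linearity to obtain the stated inequality. Your added remarks on properness and the infimal-convolution structure of $\alpha^{\mathrm{sys}}$ are consistent with how the cited result is applied; the paper simply cites it without elaboration.
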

\begin{proof}

From \citep[Proposition 3.4]{bib:AR}, we have
\begin{equation}
\rho_\cA(\Lambda(\bm{X_T}))=\sup\limits_{\substack{\QQ\in\cM_d(\PP),\\ \bm{w}\in\RR^d_+\setminus\left\{0\right\}}}\left(\bm{w}^{\intercal}\EE^{\QQ}\left[-\bm{X}_T\right]-\alpha^\mathrm{sys}(\QQ,\bm{w})\right) \,.
\label{eq:dual_rho_Lambda}
\end{equation} 

By \cref{prop:accset_rm_correspondence}, we can write
\begin{align*}
R_{\bm{S}}^\mathrm{int}(\bm{X}) = \left\{\bm{\lambda}\in\left[0,1\right] \mid \Lambda \big(\Xlambda \big)\in\cA\right\} 
=\left\{ \bm{\lambda}\in\left[0,1\right] \mid \rho_{\cA}\big(\Lambda\big(\Xlambda\big)\big) \leq0 \right\} \,.
\end{align*}
Together with \cref{eq:dual_rho_Lambda} it follows that $\bm{\lambda} \in \left[0,1\right]^d$ lies in $R_{\bm{S}}^{\mathrm{int}}(\bm{X})$ if and only if 
\begin{align*}
\forall\QQ\in\cM_d(\PP),\bm{w}\in\RR^d_+ \setminus \{0\}: 
\bm{w}^{\intercal}\EE^{\QQ}\left[-\Xlambda\right]-\alpha^\mathrm{sys}(\QQ,\bm{w})\leq 0
\end{align*}
or, rewritten,
\begin{align*}
&\forall\QQ\in\cM_d(\PP),\bm{w}\in\RR^d_+ \setminus \{0\} \colon \\
&\bm{\lambda}^{\intercal} \big(\bm{w}\htimes\EE^{\QQ}\left[\bm{X}_T-\bm{x}_0\htimes \bm{S}_T\hdiv \bm{s}_0\right] \big) \leq \alpha^\mathrm{sys}(\QQ,\bm{w}) + \bm{w}^{\intercal}\EE^{\QQ}\left[\bm{X}_T\right] \,.
\end{align*}
From here, the claim follows.
\end{proof}

\medskip

The dual representation of $R_{\bm{S}}^{\rm{int}}$ given in \cref{claim:dualRepresentations} can be interpreted in a similar way as in \citet[p.~147f]{bib:AR}. 
Consider a network consisting of $d$ institutions, represented by the elements of $\bm{x}_0$ and $\bm{X}_T$, as well as society, see \cref{sec:network} for an example of such a network model.
The dual representation collects the possible restructuring actions in the presence of model uncertainty and weight ambiguity. 

Society is assigned a probability measure $\bS$ and each institution is assigned its own probability measure $\QQ_k$ along with a weight $w_k$ with respect to society. 
The penalty function $\alpha^\mathrm{sys}$ combines two penalty terms.
One is the distance of the network to society, captured as the multivariate g-divergence of $\QQ$ with respect to $\bS$, $\EE^\bS\left[g(\bm{w}\htimes\frac{\dd\QQ}{\dd\bS})\right]$. 
The other is the penalty $\alpha(\bS)$ incurred for the choice of $\bS$.
The penalty function $\alpha^\mathrm{sys}$ is then given as the infimum of the sum of these penalties over all choices of $\bS$.

Finally, a vector of fractions $\bm{\lambda} \in [0,1]^d$ is deemed feasible for a specific choice of $\QQ$ and $\bm{w}$, if the weighted sum of expected return of the eligible asset held in the restructured portfolios of institutions exceeds the weighted expected negative return of the original positions $\bf{X}$ held in the restructured portfolios of the institutions penalised by $\alpha^\mathrm{sys}({\QQ,\bm{w}})$, 
\begin{align*}
\bm{w}^\intercal\EE^\QQ\left[\bm{\lambda} \htimes \bm{x}_0\htimes \bm{S}_T\hdiv \bm{s}_0\right]\geq \bm{w}^\intercal \EE^\QQ\left[-(1-\bm{\lambda})\htimes \bm{X}_T\right]-\alpha^\mathrm{sys}(\QQ,\bm{w}) \,.
\end{align*}
To be considered as a feasible action to compensate the systemic risk in the system, the vector of fractions $\bm{\lambda}$ has to be deemed feasible for all possible choices of probability measures $\QQ$ and weights $\bm{w}$.


\section{The network approach - a simulation study} \label{sec:network}

In this section, we will investigate the effects of the management actions underlying intrinsic systemic risk measures on networks as originally proposed by \citet{EisenbergNoe2001}.
We will complement their model with an additional sink node called \emph{society} and define the aggregation function as the net equity of society after receiving the clearing payments as described in \citep[Section 4.4]{bib:AR}, see also \citep[Section 5.2]{bib:FRW}.
This enables us to derive statements about the repercussions of an under-capitalised financial system and to monitor the impact of intrinsic management actions on the wider economy.
Furthermore, this study provides insights into current regulatory policies and challenges the current approach to capital regulation.

\subsection{Network model}

In the following, we recall the network model.
An illustration of the network structure can be seen in \cref{fig:financial_network}.
A financial system consists of $d+1$, $d \geq 2$, nodes.
Nodes $\{1,\ldots,d\}$ represent financial institutions participating in the network and node $0$ represents society.
The network is interconnected via liabilities towards each other. 
Throughout this section, we assume for simplicity that liabilities are deterministic, whereas future endowments of participants of the network are represented by a random vector $\bm{X}_T$ with an initial value $\bm{x}_0$. 
For $i,j \in \{0, \ldots, d\}$ let $L_{ij} \geq 0$ denote the nominal liability of node $i$ towards node $j$. 
Self-liabilities are disregarded, that is $L_{ii} = 0$ for all $i\in \{0,\ldots,d\}$.
Node $0$ is a sink node, which means, we assume that it has no liabilities towards the other $d$ nodes in the system, that is $L_{0i} = 0$ for $i \in \{1,\ldots,d\}$. 
In line with \citep{bib:FRW}, we interpret society as part of the wider economy.
That means node $0$ represents all outside factors which are not explicitly part of the financial system. 
This allows for the assumption $\bm{X}_T\geq0$, implicitly assuming that any operational costs and other factors that could make the endowments negative are treated as liabilities towards society.
In particular, we assume that all institutions have liabilities towards society, that is for $i \in \{1,\ldots,d\}$ we have $L_{i0} > 0$.
Furthermore, we define relative liabilities for $i,j\in\{0,\ldots,d\}$, $i \neq 0$ as 
\begin{align*}
    \Pi_{ij}= \frac{L_{ij}}{\hat{L}_i} \; \text{ with } \; \hat{L}_i = \sum_{j=0}^d L_{ij} >0 \,,
\end{align*}
where $\hat{L}_i$ is the aggregate nominal liability of $i$ towards all other nodes in the network.

\medskip

At time point $T$, liabilities are cleared.
This means that all participants in the network repay all or, if not possible, part of their liabilities. 
For a realised state $\bm{x}_T \coloneqq \bm{X}_T(\omega) \in \RR^d_+$ at time $T$ we collect these payments in a vector $p(\bm{x}_T) = (p_1(\bm{x}_T) , \ldots , p_d(\bm{x}_T))^\intercal \in \RR^d_+$, so that node $i$ pays node $j$ the amount $\Pi_{ij} p_i(\bm{x}_T)$.
We call $p(\bm{x}_T)$ a \emph{clearing payment vector} if it solves the fixed point problem
\begin{align*} 
p_i(\bm{x}_T) = \min \bigg\{\hat{L}_i \; , \; x_T^i + \sum_{j=1}^d \Pi_{ji} p_j(\bm{x}_T) \bigg\} \,, \quad i \in \{1, \ldots, d\}\,.
\end{align*}

In the case that institution $i$ stays in business and no default occurs, it pays all of its liabilities to the rest of the network, $p_i(\bm{x}_T) = \hat{L}_i$.
In the case of default, the payment is equal to the realised wealth, $x_T^i$, plus the income from the other participants of the network, $\sum_{j=1}^d \Pi_{ji} p_j(\bm{x}_T)$.
The clearing payment vector can be calculated with the `fictitious default algorithm' introduced by \citet[p.~243, see also Lemma 3 and Lemma 4]{EisenbergNoe2001}, or by means of an appropriate optimization problem with linear constraints.

\medskip

We choose the aggregation function $\Lambda:\RR^d \to \RR$ for the intrinsic systemic risk measure to represent the impact of the financial system on society.
For $\beta \in (0,1)$ we define
\begin{align} \label{eq:aggregation_fct_network_model}
\Lambda(\bm{x})=\sum\limits_{i=1}^d\Pi_{i0}p_i(\bm{x}) - \beta \sum_{i=1}^d L_{i0} \,.
\end{align} 
While other choices of an aggregation function are possible, this is an aggregation function often used in the network setting, see e.g.\ \citet{bib:FRW}, \citet{bib:AR}.
This aggregation function quantifies the (weighted) difference between what the society obtains from the other nodes in the network at the time of clearing and between what it was promised, i.e.\ the sum of the liabilities of all network participants towards society. 
Note that the first term in \cref{eq:aggregation_fct_network_model} lies in the interval $[0 , \sum_{i=1}^d L_{i0}]$.
If we were to use only this term, then every system would be acceptable with regard to $\cA_{\VaR_\alpha} = \{ X_T \in L^p \mid \VaR_\alpha(X_T) \leq 0 \}$ or $\cA_{\ES_\alpha} = \{ X_T \in L^p \mid \ES_\alpha(X_T) \leq 0 \}$.
Therefore, we subtract the second term $\beta \sum_{i=1}^d L_{i0}$.
The interpretation is in line with \citep{bib:FRW}.
We deem a system acceptable if the Value-at-Risk, respectively the Expected Shortfall, of the aggregated payments to society does not exceed the negative of a percentage $\beta$ of the total liabilities to society.
For the Value-at-Risk, this means that a system is acceptable if with at least the probability $(1-\alpha)$ at least the percentage $\beta$ of the liabilities towards society can be repaid.

\begin{figure}[ht]
	\begin{center}
		\begin{tikzpicture}[>=stealth',shorten >=1pt,auto,node distance=2cm]
			\node[state](q3) 		  							{$3$};
			\node[]		(qHilfe4)	[right of=q3] 				{};
			\node[state](q4)      	[right of = qHilfe4]		{$0$};
			\node[] 	(qHilfe12)	[left of =q3]  				{};
			\node[state](q1) 		[above left of=qHilfe12] 	{$1$};
			\node[state](q2) 		[below left of=qHilfe12] 	{$2$};
			
			\path[->]          (q1)  edge  [bend left = 20]  	node {$L_{12}$} (q2);
			\path[->]          (q2)  edge  [bend left =20]		node {$L_{21}$} (q1);
			\path[->]          (q3)  edge                 		node {$L_{32}$} (q2);
			\path[->]          (q1)  edge                 		node {$L_{13}$} (q3);		
			\path[->]      	   (q1)  edge  [bend left = 20]		node {$L_{10}$} (q4);
			\path[->]      	   (q2)  edge  [bend right = 20]	node {$L_{20}$} (q4);
			\path[->]      	   (q3)  edge     					node {$L_{30}$} (q4);
		\end{tikzpicture}
		\caption{Network of financial institutions including society as node $0$.} 
		\label{fig:financial_network}
	\end{center}
\end{figure}
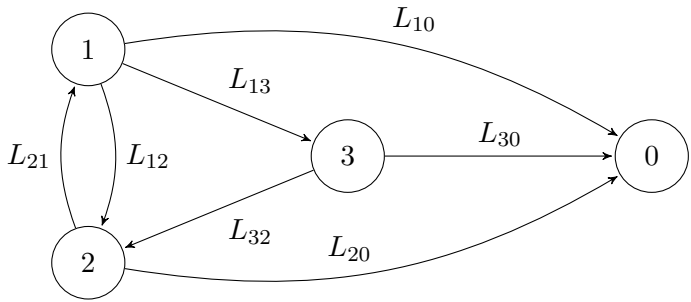

\begin{remark} \label{rem:entensions_eisenberg_noe}
It is possible to extend this network model in  a variety of ways.
A natural extension would incorporate random liabilities.
Furthermore, one can incorporate the illiquidity of assets during fire sales, see for example \citep{bib:liquidityRisk, bib:feinstein_firesales, bib:Contagion!}, bankruptcy costs as in \citep{RogersVeraart2013}, cross holdings as in \citep{bib:elsinger}, or a combination of the above as in \citep{bib:weber_weske}.
\end{remark}

\subsection{Numerical case studies}

In order to get a better understanding of the behaviour of intrinsic systemic risk measures under different model parameters, it is helpful to first consider a simple network of $d=2$ institutions.
The following model parameters define the base model.
Throughout this section, we will adjust them separately to see their effects.
The marginal distributions of the agents' wealths at time $T$ are assumed to be beta distributions, $X_T^k \sim \mathrm{Beta}(a_k,b_k)$, with $a_k = 2, b_k = 5$, $k \in \{1,2\}$.
We choose the initial value $\bm{x}_0$ such that the expected return of each institution is $15\%$, that is, $\EE[\bm{X}_T] = 1.15\, \bm{x}_0$.
The eligible assets have a log-normal marginal distribution, $S_T^k \sim \log\cN(\mu_k, \sigma_k^2)$, $k \in \{1,2\}$.
We specify $\mu_k$ and $\sigma_k$ such that the expectations of the eligible assets are equal to the expectations of the agent's positions, $\EE[\bm{X}_T] = \EE[\bm{S}_T]$, and such that the variances are a fifth of the variances of $\bm{X}_T$, $\VV[\bm{S}_T] = 0.2 \, \VV[\bm{X}_T]$.
We set the initial cost of the eligible assets such that the expected return is $10\%$, $\EE[\bm{S}_T] = 1.1\, \bm{s}_0$. 
In this model, the eligible assets are more secure in the sense that their distributions are a lot narrower.
But, in turn, this security comes at an additional cost, since $\bm{x}_0 \leq \bm{s}_0$.

Dependence is incorporated via a Gaussian copula with a correlation $\rho \in [-1,1]$ between $X_T^1$ and $X_T^2$.
The eligible assets are uncorrelated with each other and with $X_T^k$, $k\in\{1,2\}$.

Furthermore, we assume that the agents have symmetric liabilities to each other, $L_{12} = L_{21} = 0.6$, and to society, $L_{10} = L_{20} = 0.2$.
We use the aggregation function specified in \cref{eq:aggregation_fct_network_model}.
We deem a system acceptable if the Expected Shortfall at probability level $\alpha = 5\%$ of the aggregated network outcome 
is less or equal $0$, that is, $\bm{\lambda} \in \RintX$ if and only if $\ES_\alpha(\Lambda(\Xlambda))\leq 0$.

The following numerical studies consist of $10^5$ simulated samples of the described multivariate distribution.
The step size of the grid on the axes is set to $0.05$ and the size of the interval at which we stop the bisection search is set to $10^{-6}$. 
The computations are done using MATLAB.

In the following, for $\bm{\lambda} \in \RintX$ we will refer to $\Xlambda$ as an intrinsic system and to $\Lambda(\Xlambda)$ as an aggregate intrinsic system.
We use analogous nomenclature for the monetary case.

In the following figures, we depict the boundaries of $\RintX$ on the left-hand side and the boundaries of $\RmonX$ on the right-hand side.

\paragraph{Influence of the dependency structure}
In \cref{fig:intr_rm_different_rhos}, we illustrate the risk measures for different correlations between the elements of $\bm{X}_T$.
We observe that as the correlation between the elements of $\bm{X}_T$ increases, the risk measures become smaller in the sense that $R^{\mathrm{int}}_{\bm{S}}(\bm{X}_\rho) \subset R^{\mathrm{int}}_{\bm{S}}(\bm{X}_{\hat{\rho}})$ for correlations $\rho > \hat{\rho}$.
This is expected, since higher correlation between the participants of the network results in higher probability of cascades of defaults and hence, the inability to repay society.
Furthermore, allocations which are acceptable for highly correlated agents are also acceptable if the correlation decreases while other dependencies stay unaltered.

We also observe that the lines representing the boundaries of all the sets meet in two points on the boundary of $[0,1]^2$.
This comes from the fact that $\bm{X}$ and $\bm{S}$ are uncorrelated, so if one agent translates fully to the eligible asset, the correlation between $X_T^1$ and $X_T^2$ becomes irrelevant.
A similar statement can be made for the monetary risk measures, where the whole system is deemed acceptable when enough capital is added to either of the two agents. 

In this symmetric case, we observe the intuitive result that the cheapest way to acceptance, in the sense that $\lambda_1 + \lambda_2$ or $k_1 + k_2$ is minimal, is when both agents adjust their position equally, that is $\lambda_1 = \lambda_2$ or $k_1=k_2$.

\begin{figure}[h]
    \centering
    \includegraphics[scale=1.1]{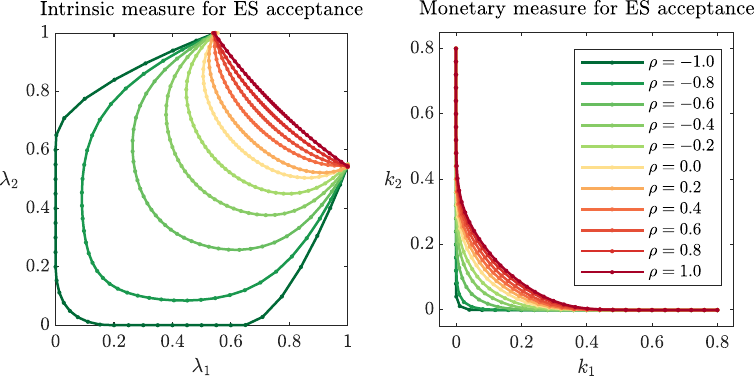}
    \caption{Visualisation of the influence of different correlations between the agents' positions.}
    \label{fig:intr_rm_different_rhos}
\end{figure}

\medskip

In the next scenario, we adjust the previous setup and assume different correlations between institution $1$ and their eligible asset.
In \cref{fig:intr_rm_different_rhos_bw_X1_S1}, we sample distributions such that $X_T^1$ and $S_T^1$ are correlated with parameter $\rho$, while the rest of the system remains uncorrelated.
The yellow lines are the same as the yellow lines in \cref{fig:intr_rm_different_rhos}, as in both cases, $X^1_T, X^2_T, S^1_T$, and $S^2_T$ are uncorrelated.
We observe an accumulation point on the set $\{\lambda_1 = 1\}$, since $X_T^2$, $S_T^2$, and $S_T^1$ are uncorrelated, and no accumulation point on $\{\lambda_2 = 1\}$, since $X_T^1$ and $S_T^1$ are correlated.
Furthermore, close to $\{\lambda_1 = 1\}$, the sets are almost identical to the ones depicted in \cref{fig:intr_rm_different_rhos}.
This means that when institution $1$ has almost fully invested in their eligible asset, the correlation between $X_T^1$ and $S_T^1$ has little effect on the management actions of institution $2$.
However, close to $\{\lambda_2 = 1\}$, we observe that negative correlation results in less strict management actions for institution $1$, whereas positive correlation results in stricter management actions compared to \cref{fig:intr_rm_different_rhos}.

\begin{figure}[h!]
    \centering
    \includegraphics[scale=1.1]{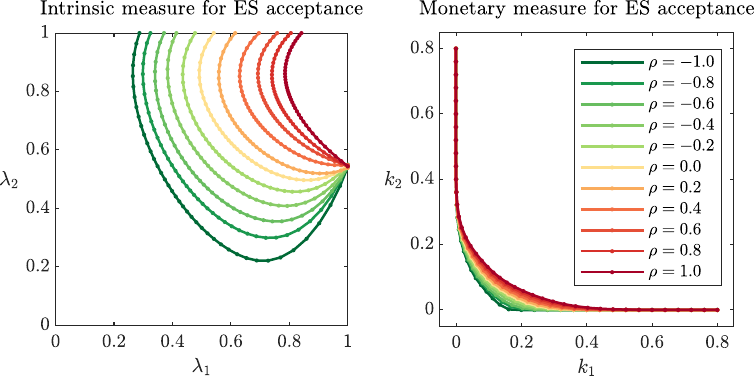}
    \caption{Visualisation of the influence of different correlations between the position of one agent and their corresponding eligible asset.}
    \label{fig:intr_rm_different_rhos_bw_X1_S1}
\end{figure}

\medskip

Now we move on to discuss the effect of correlated eligible assets.
For this we adapt the base setup by setting the correlation between $S_T^1$ and $S_T^2$ to $\rho$ and all other correlations to $0$.
The resulting risk measures are depicted in \cref{fig:intr_rm_different_rhos_S}.
Compared to the monetary measure of systemic risk, the intrinsic measure is more sensitive to the choice of eligible assets.
In this example, the aggregate position of the system consisting of only the eligible assets is acceptable up to approximately a correlation of $0.24$. 
Correlations higher than this result in an intrinsic risk measure which does not include $\bm{1}$.
However, this does not mean that the set $\RintX$ is necessarily empty.
For completeness we have included the risk measurement for $\rho=0.4$, which demonstrates that the condition $\Lambda(\bm{x}_0 \htimes \bm{S}_T \hdiv \bm{s}_0) \in \cA$ is conservative in the sense that it is sufficient but not necessary for a non-empty risk measurement.
In particular, it can be possible to construct an acceptable system even if $\Lambda(\bm{x}_0 \htimes \bm{S}_T \hdiv \bm{s}_0) \notin \cA$.
However, this boundary cannot be calculated with the algorithm described in \cref{sec:grid_search_explained} and we check the grid points on the full grid on $[0,1]^2$ instead.

Whereas from a regulatory point of view, a first thought might be to adapt the agents' positions with only one single eligible asset or asset class, this result underlines the importance of a diversified network.
In particular, it is beneficial to choose eligible assets which are negatively correlated or uncorrelated with each other. 
This effect is less apparent for monetary measures, since adding eligible assets inherently increases the overall capital level and the dependence between eligible assets plays a secondary role.
Nevertheless, adding positively correlated eligible assets increases the correlations between the agents.
\begin{figure}[h!]
    \centering
    \includegraphics[scale=1.1]{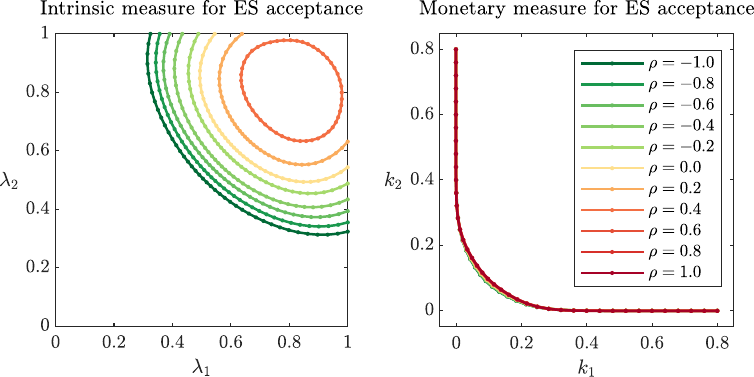}
    \caption{Visualisation of the influence of different correlations between the eligible assets.}
    \label{fig:intr_rm_different_rhos_S}
\end{figure}

\paragraph{Influence of the liability structure}
In this paragraph, we investigate the impact of the liability structure on the systemic intrinsic risk measure.
We use the parameters of the base case with uncorrelated $X_T^k,S_T^k$, $k \in \{1,2\}$.

\medskip

In \cref{fig:intr_rm_different_liabilities_to_society}, we leave the bilateral liabilities of the institutions at $L_{12} = L_{21} = 0.6$ and we gradually increase both their liabilities towards society from $L_{10} = L_{20} = 0.1$ to $L_{10} = L_{20} = 0.2$.
We observe that the risk measurements are decreasing with increasing liabilities towards society.
This is expected, as the term $\beta \sum_{i=1}^d L_{i0}$ in \cref{eq:aggregation_fct_network_model} has a linear influence on the aggregation.
It is noticeable that for low liabilities the monetary systemic risk measurements have considerable parts intersecting with $\RR^2 \setminus \RR^2_{+}$.
This means that if one institution raises enough capital, then the other one can extract capital from the system while the system remains acceptable.
Compared to \cref{fig:intr_rm_different_rhos}, where changes in correlation changed the shape of the sets and made them `pointier', changes in liabilities to society rather translate the whole set.
\begin{figure}[h!]
    \centering
    \includegraphics[scale=1.1]{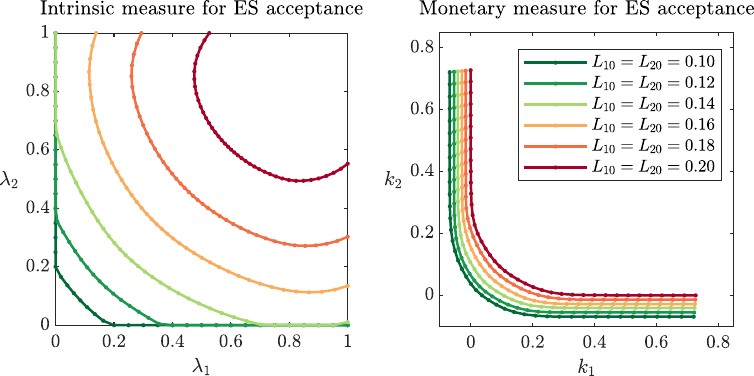}
    \caption{Visualisation of the influence of increasing symmetric liabilities towards society.}
    \label{fig:intr_rm_different_liabilities_to_society}
\end{figure}

\medskip

In the next example, we keep liabilities towards society constant at $L_{10} = L_{20} = 0.2$ and vary bilateral liabilities between the agents.
At first, the result in \cref{fig:intr_rm_different_symm_bilateral_liabilities} might seem counter-intuitive, as both intrinsic and monetary risk measurements increase with increasing liabilities.
However, increasing liabilities between the institutions in this network essentially means adding capital to the system. 
In particular, if one institution is doing poorly and goes bankrupt while the other is doing well, it will still receive the full payment from the other institution.
The higher this payment, the higher is the payment from the defaulting institution towards society.
It is an interesting observation that the intrinsic risk measurements appear to converge to a `maximal set'.
This set is very close to the one represented by the red line in \cref{fig:intr_rm_different_symm_bilateral_liabilities}.
In the case of the monetary measure, it is not clear from this preliminary investigation whether the sets approach a half-space which is supported at a point with $k_1 = k_2$.
See also \cref{fig:liabilities_limit} in \cref{appendix_liabilities}.
However, the higher the bilateral liabilities and the more external capital one of the agents holds, the more capital the other agent can extract from their position.
This could be a dangerous feature of the monetary approach.
\begin{figure}[h!]
    \centering
    \includegraphics[scale=1.1]{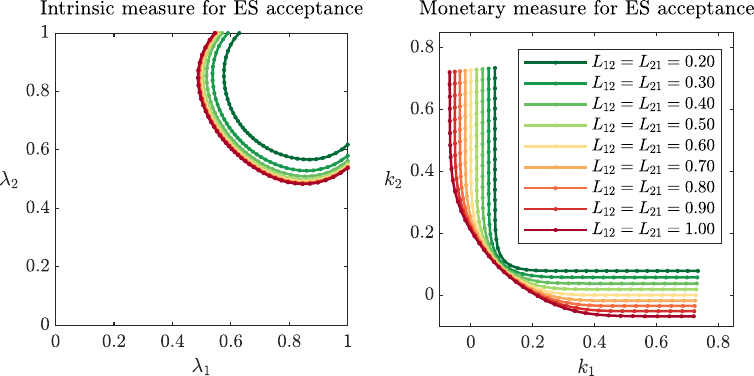}
    \caption{Visualisation of the influence of increasing symmetric bilateral liabilities between the agents.}
    \label{fig:intr_rm_different_symm_bilateral_liabilities}
\end{figure}

\paragraph{Influence of volatility}
We briefly discuss how the variance of an agent's position influences the risk measurements.
In \cref{fig:intr_rm_different_distribution_1player}, we start with the base case with uncorrelated random variables.
We then decrease the variance of the beta distribution of agent $1$ (from green to red) while keeping the expectation at $\frac{a_1}{a_1+b_1} = \frac{2}{2+5}$.
As expected, we observe that both risk measurements increase with decreasing variance.
In particular, agent $1$ needs to adjust their position less in comparison with agent $2$.
\begin{figure}[h!]
    \centering
    \includegraphics[scale=1.1]{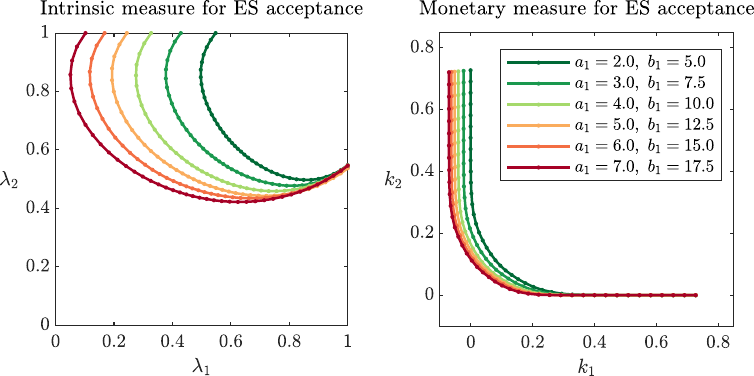}
    \caption{Visualisation of the influence of increasing variance of one agent.}
    \label{fig:intr_rm_different_distribution_1player}
\end{figure}

\paragraph{Intrinsic management actions and aggregate network outcomes}
We conclude this section with a discussion about the aggregated outcomes resulting from management actions of $\bm{\lambda}$ and $\bm{z}$ on the boundaries of $\RintX$ and $\RmonX$, respectively. 
In the following, we assume that all $X_T^k, S_T^k$, $k \in \{1,\ldots,d\}$ are uncorrelated.
Furthermore, since there are more players, we adjust the liability structure.
For $d=4$ we set $L_{ij} = 0.6$ and $L_{i0} = 0.23$ and for $d=20$ we set $L_{ij} = 0.2$ and $L_{i0} = 0.25$, for $i,j \in \{1,\ldots,d\}$.
The rest of the parameters remain unchanged.

\medskip

In \cref{fig:agg_pos_4_and_20_players}, the histograms of the aggregated outcomes of systems with four and 20 agents are depicted.
The aggregate eligible systems $\Lambda(\bm{x}_0 \htimes \bm{S}_T \hdiv \bm{s}_0)$ are presented in green, the intrinsic systems $\Lambda(\bm{X}_T^{\bm{\lambda}, \bm{S}})$ in yellow, the monetary systems
$\Lambda(\bm{X}_T + \bm{k})$ in blue, and the original unacceptable systems
$\Lambda(\bm{X}_T)$ in red.
The vectors $\bm{\lambda}$ and $\bm{k}$ lie in $\RintX$ and $\RmonX$, respectively, and they are multiples of $\bm{1}$, that is, $\bm{\lambda} = \lambda \bm{1}$ and $\bm{k} = k \bm{1}$.
In particular, the Expected Shortfall of $\Lambda(\bm{x}_0 \htimes \bm{S}_T \hdiv \bm{s}_0)$ is negative and the Expected Shortfall of $\Lambda(\bm{X}_T^{\bm{\lambda}, \bm{S}})$ and $\Lambda(\bm{X}_T + \bm{k})$ is approximately equal to $0$.
The dots of corresponding colour indicate the minimum of the support of the histogram. Note that we chose the points of the form $\bm{\lambda}=\lambda\bm{1}$ and $\bm{k}=k\bm{1}$ because we work with perfectly symmetric systems which means that these particular choices are also the solution of the scalarised minimisation problem \eqref{eq:argmin_min_sum} and its corresponding version for the monetary risk measures. If the system were not symmetric, other choices of $\bm{\lambda}$ and $\bm{k}$ might be of interest. 
One reason for this is that the solution to \eqref{eq:argmin_min_sum} does not take the form of $\bm{\lambda}=\lambda\bm{1}$. 
Additionally, in systems where there is a clear distinction between large and small firms, a weighted sum minimisation problem can be formulated by assigning different capital prices to each group of firms.

First we notice that the minimum value of the aggregated intrinsic position (yellow dot) is greater than the one of the monetary position (blue dot).
Since the expected shortfall of both positions is approximately $0$, the mass in the tail of the distribution of the aggregate monetary position is more spread out.
In this sense, the worst cases of the intrinsic positions are milder compared to the monetary positions.
Furthermore, we observe that the distribution of the intrinsic system is more right-skewed.
This means that the intrinsic system is more likely to repay more of its liabilities to society.
From a regulatory perspective this is a valuable insight, as it demonstrates that changing the structure of a financial system can be more beneficial to society than elevating it by external capital.

\begin{figure}[h!]
    \centering
    \includegraphics[scale=1.1]{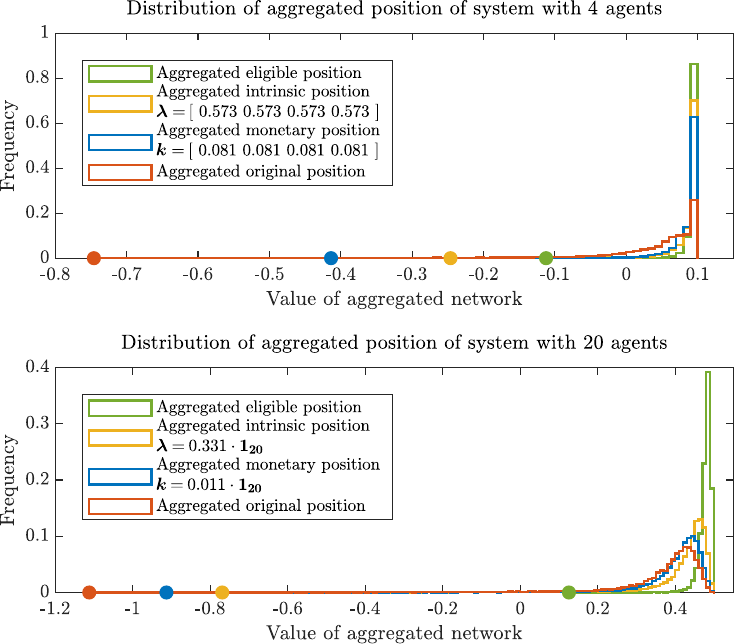}
    \caption{Visualisation of the influence of increasing liabilities of one agents to the other.}
    \label{fig:agg_pos_4_and_20_players}
\end{figure}

\medskip

From a preliminary statistical analysis, we observe that the variance of the aggregate intrinsic system is in general slightly smaller compared to the variance of the aggregate monetary system, whereas the expected value is slightly bigger. 
However, this needs to be verified in a more elaborate study.

In order to understand the effect of additionally incurred costs due to the implementation of the management actions, we have also considered a crude implementation of transaction costs and cost of debt.
For the intrinsic measure we implemented transaction costs of $50$ basis points of $\bm{\lambda}\htimes \bm{x}_0$ once for selling the original position and once for buying the eligible asset.
In the monetary case, we implemented transaction costs for buying the eligible asset and the cost of debt of $2.64\%$ for raising the necessary external capital $\bm{k}$.
In both cases, the resulting aggregate positions do not change considerably and we omit these results here.

\section{Conclusion and possible extensions} \label{sec:conclusion}

We proposed a novel approach to measure systemic risk.
We challenged the paradigm of using external capital injections to the financial system and suggested realistic management actions that fundamentally change the structure of the system such that it can become less volatile and less correlated.
We developed two algorithms, one to approximate the boundary of intrinsic systemic risk measurements and the other to find specific minimal points without calculating the whole boundary.
Furthermore, we derived a dual representation of intrinsic systemic risk measures.
Finally, on the basis of numerical case studies, we demonstrated that intrinsic systemic risk measures are a useful tool to analyse and mitigate systemic risk.

\medskip

We mention here possible extensions and further research avenues.

The notion of EARs associated with monetary systemic risk measures needs to be adapted to intrinsic systemic risk measures in a meaningful way.
In particular, the absence of the upper set property calls for the introduction of further properties to allow agents in the network to deviate from the suggested risk measurement in a controlled way without loosing acceptability.
Furthermore, it remains to be shown that this would allow to group together similar institutions to reduce the complexity of the model and computational time.

The Eisenberg-Noe model has seen numerous extensions which can also be applied to our framework.
These include in particular extensions to random liabilities and the incorporation of illiquidity during fire sales.
This is interesting, as intrinsic measures rely on selling parts of risky portfolios and buying safer assets.
While the primary objective of intrinsic systemic risk measures is to mitigate risk to prevent crises, it is valuable to know what restrictions it faces during a crisis.

Moreover, it is necessary to study more asymmetrically structured networks and develop `fair' allocation rules, as for example according to the contribution of liabilities to society.

Furthermore, it would be valuable to conduct an empirical case study including systemically relevant banks and their actual liabilities towards each other and towards participants of the greater economy.

\appendix
\section{Proofs}
\label{appendix_proofs}

In the following, we prove the assertions made in \cref{prop:properties_syst_risk}.
\begin{proof}[Proof of \cref{prop:properties_syst_risk}] 
Let $\XT\in L^p_d$ and $\bm{\ell}\in\RR^d$.
$\bm{S}$-additivity follows directly from
\begin{align*}
\Rmon(\XT+\ell\htimes\ST\hdiv\so) 
&=\left\{\bm{k}\in\RR^d\mid\Lambda(\XT+(\bm{\ell}+\bm{k})\htimes\ST\hdiv\so)\in\cA\right\}\\
&=\left\{\hat{\bm{k}}\in\RR^d\mid\Lambda(\XT+\hat{\bm{k}}\htimes\ST\hdiv\so)\in\cA\right\}-\bm{\ell}=\RmonX-\bm{\ell}.
\end{align*}

Monotonicity follows from monotonicity of $\cA$ and $\Lambda$. 
For $\XT,\bm{Y}_T\in L^p_d$ with $\XT\leq \bm{Y}_T$ $\PP$-a.s.~and $\bm{k}\in\RmonX$ we have $\Lambda(\bm{Y}_T+\bm{k}\htimes\ST\hdiv\so) \geq \Lambda(\XT+\bm{k}\htimes\ST\hdiv\so) \in \cA$.
This implies $\bm{k}\in\Rmon(\bm{Y}_T)$.

$\bm{S}$-additivity and monotonicity together imply that the values of $\Rmon$ are upper sets. 
Let $\XT\in L^p_d$ and $\bm{y}\in\RR^d_+$. Then $\XT-\bm{y}\htimes\ST\hdiv\so\leq\XT$ and we have 
\begin{align*}
\Rmon(\XT)+\bm{y}=\Rmon(\XT-\bm{y}\htimes\ST\hdiv\so)\subseteq\Rmon(\XT) \,.
\end{align*}
Since the above holds for any $\bm{y}\in\RR^d_+$, the claim follows.

For positive homogeneity, assume that $\cA$ is a cone, $\Lambda$ is positively homogeneous and let $X\in L^p_d$ and $c>0$. 
Notice that in this case $\Lambda(c\XT+\bm{k}\htimes\ST\hdiv\so)\in\cA$ is equivalent to $\Lambda(\XT+\frac{1}{c}\bm{k}\htimes\ST\hdiv\so) \in\cA$.
Therefore,
\begin{align*}
\Rmon(c\XT) 
&=\left\{\bm{k}\in\RR^d\mid \Lambda(\XT+\frac{1}{c}\bm{k}\htimes\ST\hdiv\so) \in\cA \right\}\\
&=c\left\{\hat{\bm{k}}\in\RR^d\mid \Lambda(\XT+\hat{\bm{k}}\htimes\ST\hdiv\so) \in\cA \right\}=c\RmonX \,.
\end{align*}

Finally, for properties \ref{item_convex} and \ref{item_convexvalues} we assume that $\cA$ is convex and $\Lambda$ is concave and let $\XT, \bm{Y}_T\in L^p_d$, $\alpha\in[0,1]$. 
To show convexity, let $\bm{x}\in\RmonX$, $\bm{y}\in\Rmon(\bm{Y}_T)$. 
We get 
\begin{align*}
\Lambda(\alpha\XT&+(1-\alpha)\bm{Y}_T+(\alpha\bm{x}+(1-\alpha)\bm{y})\htimes\ST\hdiv\so )\\
&=\Lambda\big(\alpha(\XT+\bm{x}\htimes\ST\hdiv\so)+(1-\alpha)(\bm{Y}_T+\bm{y}\htimes\ST\hdiv\so)\big)\\
&\geq\alpha\Lambda(\XT+\bm{x}\htimes\ST\hdiv\so)+(1-\alpha)\Lambda(\bm{Y}_T+\bm{y}\htimes\ST\hdiv\so)\in\cA \,,
\end{align*}
where the element inclusion is implied by the convexity of $\cA$. By monotonicity of $\cA$ the assertion follows.

To show that $\RmonX$ has convex values, let $\bm{k},\bm{\ell}\in\RmonX$. 
Notice that 
\begin{align*}
    \XT&+(\alpha\bm{k}+(1-\alpha)\bm{\ell})\htimes\ST\hdiv\so = \alpha(\XT+\bm{k}\htimes\ST\hdiv\so)+(1-\alpha)(\XT+\bm{\ell}\htimes\ST\hdiv\so) \,. 
\end{align*}
The assertion follows as in the proof of \ref{item_convex}.
\end{proof}

\bigskip

In the following, we prove \cref{lemma:convex_set_and_planes}.
\begin{proof}[Proof of \cref{lemma:convex_set_and_planes}]
    Assume by contradiction that for some $\epsilon>0$ with $A_{k-\epsilon} \subset A_k - \frac{\epsilon}{d}\bm{1}$ there exists an $\delta>\epsilon$ and an $\bm{x}_\delta \in A_{k-\delta}$ such that $\bm{x}_\delta \notin A_{k-\epsilon} - \frac{\delta - \epsilon}{d}\bm{1}$.
    Since $A$ is closed, $\hat{\bm{x}} = \argmin_{\bm{x} \in A_{k-\epsilon}} \Vert \bm{x} - \bm{x}_\delta \Vert$ exists and is contained in $A_{k-\epsilon}$.
    Notice that by assumption, $\hat{\bm{x}} \neq \bm{x}_\delta + \frac{\delta-\epsilon}{d}\bm{1}$.
    Therefore there exists $\bm{y} \in \RR^d \setminus \{\bm{0}\}$ with $\bm{y}^\intercal \bm{1} = 0$ such that $\bm{x}_\delta = \hat{\bm{x}} - \frac{\delta-\epsilon}{d}\bm{1} + \bm{y}$.
    Furthermore, notice that $\Vert \bm{x}_\delta - (\hat{\bm{x}}+\beta \bm{y}) \Vert = \Vert (1-\beta) \bm{y} - \frac{\delta-\epsilon}{d}\bm{1} \Vert$ is decreasing in $\beta \in [0,1]$.
    In particular for $\beta \in (0,1]$, $\hat{\bm{x}} + \beta\bm{y}$ cannot lie in $A_{k-\epsilon}$, and since $(\hat{\bm{x}} + \beta\bm{y})^\intercal \bm{1} = k-\epsilon$, $\hat{\bm{x}} + \beta\bm{y} \notin A$.
    
    Now let $\bm{x}_\epsilon = \frac{\epsilon}{\delta} \bm{x}_\delta + (1- \frac{\epsilon}{\delta}) (\hat{\bm{x}} + \frac{\epsilon}{d}\bm{1}) = \hat{\bm{x}} + \frac{\epsilon}{\delta} \bm{y}$.
    From the previous observation, we see that $\bm{x}_\epsilon \notin A$.
    However by assumption, $\hat{\bm{x}} + \frac{\epsilon}{d}\bm{1} \in A_k$.
    So by convexity of $A$, $\bm{x}_\epsilon$ must lie in $A$.
    This is a contradiction and therefore, such an $\bm{x}_\delta$ cannot exist.
\end{proof}

\section{Note on increasing bilateral liabilities} \label{appendix_liabilities}

This appendix complements the discussion around \cref{fig:intr_rm_different_symm_bilateral_liabilities}.
As bilateral liabilities between two agents increase, both systemic risk measures increase. 
The following figure illustrates the `limit set' of the intrinsic systemic risk measure.
For monetary systemic risk measurements it is not clear from our simulations whether they converge to a half-space or not.

\begin{figure}[h!]
    \centering
    \includegraphics[scale=1.1]{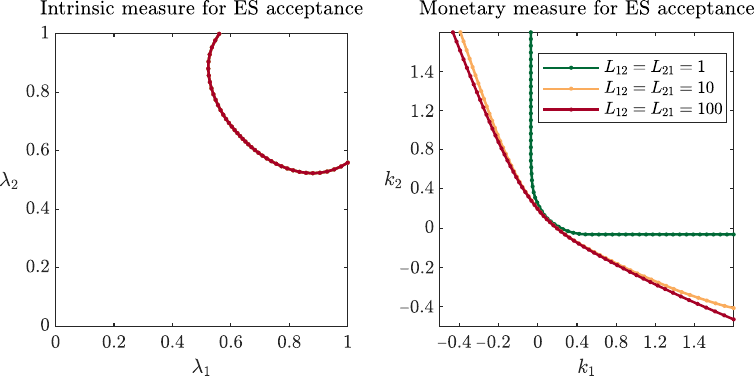}
    \caption{Visualisation of the limit set for increasing bilateral liabilities.}
    \label{fig:liabilities_limit}
\end{figure}

\medskip

\paragraph{Acknowledgements}
The authors are thankful to Gabriela Kováčová for her contributions to the proof of \cref{lemma:convex_set_and_planes}. Jana Hlavinov{\'a} and Birgit Rudloff acknowledges support from the OeNB anniversary fund, project number 17793, and from the Vienna Graduate School on Computational Optimization, Austrian Science Fund (FWF): W1260-N35.

\bibliographystyle{mybibstyle}
\bibliography{bibliography}

\begin{thebibliography}{24}
\providecommand{\natexlab}[1]{#1}

\bibitem[{Acharya, Pedersen, Philippon, and Richardson(2016)}]{Acharya2016}
Acharya, V.V., Pedersen, L.H., Philippon, T., and Richardson, M., 2016.
\newblock {Measuring Systemic Risk}.
\newblock \emph{The Review of Financial Studies}, 30(1), 2--47.

\bibitem[{Adrian and Brunnermeier(2016)}]{bib:adrian16}
Adrian, T. and Brunnermeier, M.K., 2016.
\newblock Covar.
\newblock \emph{The American Economic Review}, 106(7), 1705--1741.

\bibitem[{Ararat and Rudloff(2020)}]{bib:AR}
Ararat, {\c C}. and Rudloff, B., 2020.
\newblock Dual representations for systemic risk measures.
\newblock \emph{Mathematics and Financial Economics}, 14, 139--174.

\bibitem[{Artzner, Delbaen, Eber, and Heath(1999)}]{bib:ADHE}
Artzner, P., Delbaen, F., Eber, J.M., and Heath, D., 1999.
\newblock Coherent measures of risk.
\newblock \emph{Mathematical Finance}, 9(3), 203--228.

\bibitem[{Brunnermeier(2009)}]{bib:brunnermeier09}
Brunnermeier, M.K., 2009.
\newblock Deciphering the liquidity and credit crunch 2007-2008.
\newblock \emph{Journal of Economic Perspectives}, 23(1), 77--100.

\bibitem[{Cerreia-Vioglio, Maccheroni, Marinacci, and
  Montrucchio(2011)}]{bib:CMMM}
Cerreia-Vioglio, S., Maccheroni, F., Marinacci, M., and Montrucchio, L., 2011.
\newblock Risk measures: Rationality and diversification.
\newblock \emph{Mathematical Finance}, 21(4), 743--774.

\bibitem[{Chen, Iyengar, and Moallemi(2013)}]{ChenIyengarMoallemi2013}
Chen, C., Iyengar, G., and Moallemi, C.C., 2013.
\newblock {An Axiomatic Approach to Systemic Risk}.
\newblock \emph{Manag. Sci.}, 59(6), 1373--1388.

\bibitem[{Cifuentes, Ferrucci, and Shin(2005)}]{bib:liquidityRisk}
Cifuentes, R., Ferrucci, G., and Shin, H.S., 2005.
\newblock Liquidity risk and contagion.
\newblock \emph{Journal of the European Economic Association}, 3(2/3),
  556--566.

\bibitem[{Eisenberg and Noe(2001)}]{EisenbergNoe2001}
Eisenberg, L. and Noe, T.H., 2001.
\newblock Systemic risk in financial systems.
\newblock \emph{Management Science}, 47(2), 236--249.

\bibitem[{Elsinger(2009)}]{bib:elsinger}
Elsinger, H., 2009.
\newblock \emph{Financial Networks, Cross Holdings, and Limited Liability}.
\newblock Working papers. Österreichische Nationalbank.

\bibitem[{Farkas, Koch-Medina, and Munari(2014{\natexlab{a}})}]{bib:WPC2}
Farkas, W., Koch-Medina, P., and Munari, C., 2014{\natexlab{a}}.
\newblock Beyond cash-additive risk measures: when changing the num{\'e}raire
  fails.
\newblock \emph{Finance and Stochastics}, 18(1), 145--173.

\bibitem[{Farkas, Koch-Medina, and Munari(2014{\natexlab{b}})}]{bib:WPC1}
Farkas, W., Koch-Medina, P., and Munari, C., 2014{\natexlab{b}}.
\newblock Capital requirements with defaultable securities.
\newblock \emph{Insurance: Mathematics and Economics}, 55, 58--67.

\bibitem[{Farkas, Koch-Medina, and Munari(2015)}]{bib:WPC3}
Farkas, W., Koch-Medina, P., and Munari, C., 2015.
\newblock Measuring risk with multiple eligible assets.
\newblock \emph{Mathematics and Financial Economics}, 9(1), 3 – 27.

\bibitem[{Farkas and Smirnow(2018)}]{bib:FSint}
Farkas, W. and Smirnow, A., 2018.
\newblock Intrinsic risk measures.
\newblock In: \emph{Innovations in Insurance, Risk- and Asset Management},
  pages 163--184.

\bibitem[{Feinstein(2017)}]{bib:feinstein_firesales}
Feinstein, Z., 2017.
\newblock Financial contagion and asset liquidation strategies.
\newblock \emph{Operations Research Letters}, 45(2), 109--114.

\bibitem[{Feinstein, Rudloff, and Weber(2017)}]{bib:FRW}
Feinstein, Z., Rudloff, B., and Weber, S., 2017.
\newblock Measures of systemic risk.
\newblock \emph{SIAM Journal on Financial Mathematics}, 8(1), 672--708.

\bibitem[{F{\"o}llmer and Schied(2002)}]{bib:FSconvex}
F{\"o}llmer, H. and Schied, A., 2002.
\newblock Convex measures of risk and trading constraints.
\newblock \emph{Finance Stoch.}, 6(4), 429--447.

\bibitem[{F{\"o}llmer and Schied(2004)}]{bib:FS}
F{\"o}llmer, H. and Schied, A., 2004.
\newblock \emph{Stochastic Finance: An Introduction in Discrete Time}.
\newblock De Gruyter studies in mathematics. Walter de Gruyter.

\bibitem[{Hurd(2016)}]{bib:Contagion!}
Hurd, T.R., 2016.
\newblock \emph{Contagion! Systemic Risk in Financial Networks}.
\newblock Springer Cham.

\bibitem[{Laudagé, Sass, and Wenzel(2022)}]{LAUDAGE2022254}
Laudagé, C., Sass, J., and Wenzel, J., 2022.
\newblock Combining multi-asset and intrinsic risk measures.
\newblock \emph{Insurance: Mathematics and Economics}, 106, 254--269.

\bibitem[{Munari(2015)}]{bib:Munari}
Munari, C.A., 2015.
\newblock \emph{Measuring risk beyond the cash-additive paradigm}.
\newblock Ph.D. thesis, Diss.~ETH no.~22541, ETH Z\"urich.

\bibitem[{Rogers and Veraart(2013)}]{RogersVeraart2013}
Rogers, L.C.G. and Veraart, L.A.M., 2013.
\newblock {Failure and Rescue in an Interbank Network}.
\newblock \emph{Manag.Sci}, 59(4), 882--898.

\bibitem[{Silva, Kimura, and Sobreiro(2017)}]{bib:systemic_risk_survey}
Silva, W., Kimura, H., and Sobreiro, V.A., 2017.
\newblock An analysis of the literature on systemic financial risk: A survey.
\newblock \emph{Journal of Financial Stability}, 28, 91--114.

\bibitem[{Weber and Weske(2017)}]{bib:weber_weske}
Weber, S. and Weske, K., 2017.
\newblock The joint impact of bankruptcy costs, fire sales and cross-holdings
  on systemic risk in financial networks.
\newblock \emph{Probability, Uncertainty and Quantitative Risk}, 2(9).

\end{thebibliography}

\end{document}